\title{Lower bounds for set-blocked~clauses proofs}
\author{
  Emre Yolcu\thanks{
    Computer Science Department, Carnegie Mellon University.
    Email: \texttt{eyolcu@cs.cmu.edu}.
  }
}
\date{}
\begin{document}

\maketitle

\begin{abstract}
  We study propositional proof systems with inference rules
  that formalize restricted versions of the ability
  to make assumptions that hold without loss of generality,
  commonly used informally to shorten proofs.
  Each system we study is built on resolution.
  They are called $\BC^-$, $\RAT^-$, $\SBC^-$, and $\GER^-$, denoting respectively
  blocked clauses, resolution asymmetric tautologies,
  set-blocked clauses, and generalized extended resolution%
  ---all ``without new variables.''
  They may be viewed as weak versions of extended resolution~($\ER$)
  since they are defined by first generalizing the extension rule
  and then taking away the ability to introduce new variables.
  Except for~$\SBC^-$, they are known to be
  strictly between resolution and extended resolution.

  Several separations between these systems were proved earlier
  by exploiting the fact that they effectively simulate~$\ER$.
  We answer the questions left open:
  We prove exponential lower bounds for $\SBC^-$~proofs
  of a binary encoding of the pigeonhole principle,
  which separates $\ER$~from~$\SBC^-$.
  Using this new separation, we prove that both $\RAT^-$~and~$\GER^-$
  are exponentially separated from~$\SBC^-$.
  This completes the picture of their relative strengths.
\end{abstract}

\section{Introduction}
\label{sec:introduction}

When writing proofs informally, it is sometimes convenient
to make assumptions that hold ``without loss of generality.''
For instance, if we are proving a statement about real numbers $x$~and~$y$,
we might assume without loss of generality that $x \geq y$
and continue the proof under this additional assumption.
Such an assumption requires justification,
for instance by arguing that the two variables are
interchangeable in the statement being proved.
Assumptions of this kind are not essential to proofs
but they simplify or shorten the presentation.

We study propositional proof systems%
\footnote{Throughout the rest of this paper, by ``proof''
  we mean a proof of unsatisfiability (i.e., a refutation).}
with inference rules that allow making such assumptions.
Extended resolution~\cite{Tse68}
(equivalently, Extended Frege~\cite{CR79})
already simulates this kind of reasoning;
however, it presumably does more,
and its strength is poorly understood.
We thus focus on ``weak'' systems
built on top of resolution~\cite{Bla37,Rob65}
and lacking the ability to introduce any new variables,
while still being able to reason without loss of generality.
Each system relies on a polynomial-time verifiable syntactic condition
to automatically justify the assumption being made,
and the exact form of this condition
determines the strength of the proof system.
The systems are defined by first generalizing the extension rule
and then taking away the ability to introduce new variables,
so, for lack of a better term, we will refer to these systems
collectively as ``weak extended resolution systems'' in this section.
Although we are referring to them as weak,
some variants are surprisingly strong in that they admit
polynomial-size proofs of the pigeonhole principle,
bit pigeonhole principle, parity principle,
clique--coloring principle, and Tseitin tautologies,
as well as being able to undo (with polynomial-size derivations)
the effects of or-ification, xor-ification,
and lifting with indexing gadgets~\cite{BT21}.
In this paper, we study the relative strengths
of several variants of those systems
and answer the questions left open in previous work~\cite[Section~1.4]{YH23}.

\subsection{Motivation}
\label{sec:motivation}

Our interest in weak extended resolution systems
is rooted in two different areas:
proof complexity and satisfiability (SAT) solving.

\subsubsection{Proof complexity}
\label{sec:proof-complexity}

Proof complexity is concerned with the sizes of proofs in propositional proof systems.
The notion of a proof system as accepted in proof complexity is rather general%
---it covers not only the ``textbook'' deductive systems for propositional logic
but also several systems that capture different forms of mathematical reasoning.
For instance, the widely studied proof systems of
cutting planes~\cite{CCT87} and polynomial calculus~\cite{CEI96}
utilize simple forms of geometric and algebraic reasoning, respectively.
Weak extended resolution systems are somewhat similar,
although they do not originate from any specific branch of mathematics.
Instead, they capture the pervasive technique of
reasoning without loss of generality, often used to shorten proofs.
Since proof complexity is concerned with proof size,
the limits of the degree of brevity achievable by this form of reasoning
is a natural question from the perspective of proof complexity.

Moreover, the upper bounds proved by Buss and Thapen~\cite{BT21}
show that many of the usual ``hard'' combinatorial principles
used for proof complexity lower bounds are easy to prove
in certain weak extended resolution systems.
In other words, a modest amount of the ability to reason without loss of generality
lends surprising strength to even a system as weak as resolution,
and the full strength of extended resolution
is not required for the combinatorial principles previously mentioned.
Thus, many of the existing separations between extended resolution
and the commonly studied proof systems can be attributed to the fact
that extended resolution can reason without loss of generality
while the other systems cannot.
Searching for principles that separate extended resolution
from the weak extended resolution systems will help us
better understand other facets of the strength of extended resolution.

\subsubsection{SAT solving}
\label{sec:SAT-solving}

Another motivation for studying the weak extended resolution systems
is their potential usefulness for improvements in SAT solvers,
which are practical implementations of propositional theorem provers
that determine whether a given formula in conjunctive normal form is satisfiable.
When a solver claims unsatisfiability, it is expected to produce a proof
that can be used to verify the claim efficiently.
Modern SAT solvers, which are based on
conflict-driven clause learning (CDCL)~\cite{MS99},
essentially search for resolution proofs.
Consequently, the well-known exponential lower bounds against resolution~%
(e.g.,~\cite{Hak85,Urq87}) imply exponential lower bounds against
the runtimes of CDCL-based solvers.
To overcome the limitations of resolution,
SAT solvers are forced to go beyond CDCL\@.

Many of the current solvers employ ``inprocessing'' techniques~\cite{JHB12},
which support inferences of the kind that we study in this paper.
These techniques are useful in practice;
however, they are implemented as ad hoc additions to CDCL\@.
Weak extended resolution systems hold the potential
for improving SAT solvers in a more principled manner
(e.g., through the development of a solving paradigm
that corresponds to one of those systems
in a manner similar to how CDCL corresponds to resolution).
In comparison, proof systems such as cutting planes, polynomial calculus,
DNF resolution~\cite{Kra01}, or Frege~\cite{CR79}
appear more difficult to take advantage of,
at least in part due to their richer syntax.
When dealing only with clauses, it becomes possible to
achieve highly efficient constraint propagation,
which is an important reason for the speed of CDCL-based solvers.
Extended resolution also works only with clauses;
however, there are currently no widely applicable heuristics
for introducing new variables during proof search.
Weak extended resolution systems are relatively strong
despite using only clauses without new variables.
Thus, those systems are promising for practical proof search algorithms.

Earlier works~\cite{HKSB17,HKB19} showed that solvers based on
certain weak extended resolution systems can automatically discover
small proofs of some formulas that are hard for resolution,
such as the pigeonhole principle and the mutilated chessboard principle.
Still, those solvers fall behind CDCL-based solvers on other classes of formulas.
Moreover, there appears to be a tradeoff when choosing a system for proof search:
stronger systems enable smaller proofs;
however, proof search in such systems is costlier with respect to proof size.
To be able to choose the ideal system for proof search
(e.g., the weakest system that is strong enough for one's purposes),
it is important to understand the relative strengths of the systems in question.
This paper is a step towards that goal.

\subsection{Background}
\label{sec:background}

We briefly review some background,
deferring formal definitions to \cref{sec:preliminaries}.
For comprehensive overviews of related work,
see Buss and Thapen~\cite{BT21} and Yolcu and Heule~\cite{YH23}.

The proof systems we study are based on the notion of ``redundancy.''
A clause is \emph{redundant} with respect to a formula
if it can be added to or removed from the formula
without affecting satisfiability.
Redundancy is a generalization of logical implication:
if $\Gamma \models C$ then the clause $C$
is redundant with respect to the set~$\Gamma$ of clauses;%
\footnote{We use ``set of clauses'' and ``formula'' interchangeably.}
however, the converse is not necessarily true.
When proving unsatisfiability,
deriving redundant clauses corresponds to
making assumptions that hold without loss of generality.%
\footnote{When refuting a formula~$\Gamma$,
  deriving a redundant clause~$C$ may be viewed as stating the following:
  ``If there exists an assignment satisfying~$\Gamma$,
  then there also exists an assignment satisfying both~$\Gamma$~and~$C$,
  so without loss of generality we can assume that $C$ holds.''}
To ensure that proofs can be checked in polynomial time,
we work with restricted versions of redundancy
that rely on syntactic conditions.

Possibly the simplest interesting version
is blockedness~\cite{Kul97,Kul99a}.
We say a clause~$C$ is \emph{blocked}
with respect to a set~$\Gamma$ of clauses
if there exists a literal~$p \in C$
such that all possible resolvents of~$C$ on~$p$
against clauses from~$\Gamma$ are tautological
(i.e., contain a literal and its negation).
Kullmann~\cite{Kul99b} showed that blocked clauses are
special cases of redundant clauses
and thus considered an inference rule that,
given a formula~$\Gamma$, allows us to extend~$\Gamma$
with a clause that is blocked with respect to~$\Gamma$.
This rule, along with resolution, gives
the proof system called \emph{blocked clauses}~($\BC$).
It is apparent from the definition of a blocked clause
that deleting clauses from~$\Gamma$ enlarges
the set of clauses that are blocked with respect to~$\Gamma$.
With this observation, Kullmann defined a strengthening of~$\BC$
called \emph{generalized extended resolution}~($\GER$)
that allows temporary deletion of clauses from~$\Gamma$.
Later works~\cite{JHB12,KSTB18} defined
more general classes of redundant clauses
and proof systems based on them,
called \emph{resolution asymmetric tautologies}~($\RAT$)
and \emph{set-blocked clauses}~($\SBC$).
Both $\RAT$~and~$\SBC$ use relaxed versions of blocked clauses:
$\RAT$ changes the word ``tautological'' in the definition of a blocked clause,
and, in a sense, $\SBC$ considers possible resolvents on more than a single literal.

As defined, $\BC$~simulates extended resolution
since extension clauses can be added in sequence as blocked clauses
if we are allowed to introduce new variables~\cite[see][Section~6]{Kul99b}.
Thus, we disallow new variables to weaken the systems.
A proof of~$\Gamma$ is \emph{without new variables}
if it contains only the variables that already occur in~$\Gamma$.
We denote a proof system variant that disallows new variables
with the superscript~``$-$'' (e.g.,~$\BC^-$ is $\BC$~without new variables).
We are concerned in this paper with the strengths of
the systems $\RAT^-$, $\SBC^-$, and $\GER^-$,
each of which generalizes~$\BC^-$ in different ways.

As a technical side note, the systems we study
are unusual in the following respects:
They are not monotonic, since a clause redundant with respect to~$\Gamma$
is not necessarily redundant with respect to~$\Gamma' \supseteq \Gamma$.
This causes \emph{deletion} (i.e., the ability
to delete clauses in the middle of a proof)
to increase the strength of those systems.
It also requires one to pay attention to
the order of inferences when proving upper bounds.
Additionally, they are not a priori closed under restrictions,
which means that extra care is required when proving lower bounds against them.
More specifically, a proof system~$P$ that simulates~$\BC^-$
is not closed under restrictions unless $P$
also simulates extended resolution~\cite[Theorem~2.4]{BT21}.
It follows from earlier lower bounds~\cite{Kul99b,BT21}
and \cref{sec:lower-bound-for-BPHP} in this paper
that $\BC^-$, $\RAT^-$, $\SBC^-$, and $\GER^-$ are not closed under restrictions.

\subsection{Results}
\label{sec:results}

This work follows up on Yolcu and Heule~\cite{YH23},
which proved separations between the different generalizations of~$\BC^-$
by exploiting the fact that, although the systems cannot introduce new variables,
they nevertheless effectively simulate~\cite{PS10} extended resolution.
Their strategy uses so-called ``guarded extension variables,''
where we consider systems $P$~and~$Q$
that both effectively simulate a strong system~$R$,
and we incorporate extension variables into formulas
in a guarded way that allows~$P$ to simulate an~$R$-proof
while preventing~$Q$ from making any meaningful use
of the extension variables to achieve a speedup.
This allows using, as black-box,
a separation of $R$~from~$Q$ to separate $P$~from~$Q$.
For more details about this strategy,
we refer the reader to Yolcu and Heule~\cite[Section~1.3]{YH23}.

\begin{figure}[!ht]
  \centering
  \begin{tikzpicture}[
    scale=3.15,
    every node/.style={minimum width={width("$\DRAT^{\smash{-}}$")}},
    axis/.style={black!35}
    ]
    \node (BC-) at (0, 0) {$\BC^{\smash{-}}$};
    \node (SBC-) at ($(BC-) + (225:1)$) {$\SBC^{\smash{-}}$};
    \node (SBC-shleft-RAT) at ($(SBC-) + (-0.0225, 0)$) {\phantom{$\SBC^{\smash{-}}$}};
    \node (SBC-shleft-SPR) at ($(SBC-) + (-0.05, 0)$) {\phantom{$\SBC^{\smash{-}}$}};
    \node (RAT-) at (0, 2) {$\RAT^{\smash{-}}$};
    \node (RAT-shleft) at ($(RAT-) + (-0.0225, 0)$) {\phantom{$\RAT^{\smash{-}}$}};
    \node (SPR-) at ($(BC-) + (225:1) + (0, 2)$) {$\SPR^{\smash{-}}$};
    \node (SPR-shleft) at ($(SPR-) + (-0.05, 0)$) {\phantom{$\SPR^{\smash{-}}$}};
    \node (GER-) at (1.1, 0) {$\GER^{\smash{-}}$};
    \node (DBC-) at (2.2, 0) {$\DBC^{\smash{-}}$};
    \node (DSBC-) at ($(BC-) + (225:1) + (2.2, 0)$) {$\DSBC^{\smash{-}}$};
    \node (DRAT-) at (2.2, 2) {$\DRAT^{\smash{-}}$};
    \node (DSPR-) at ($(BC-) + (225:1) + (2.2, 2)$) {$\DSPR^{\smash{-}}$};
    \node[minimum size=3pt, inner sep=0pt] (SBC-GER-) at ($(SBC-)!0.5!(GER-)$) {};
    \node[minimum size=3pt, inner sep=0pt] (SBC-RAT-) at ($(SBC-)!0.5!(RAT-)$) {};
    \node[minimum size=3pt, inner sep=0pt] (SBC-RAT-shleft) at ($(SBC-)!0.5!(RAT-) + (-0.0225, 0)$) {};
    \draw[style=stronger] (DRAT-) -- (RAT-);
    \draw[style=stronger] (SPR-) -- (RAT-);
    \draw[style=equivalent, nontriv] (DBC-) -- (DRAT-);
    \draw[style=stronger, new] (SPR-shleft) -- node[above, sloped, xscale=-1, yscale=-1, yshift=2pt] {\scriptsize\cref{thm:SBC-lower-bound}} (SBC-shleft-SPR);
    \draw[style=incomparable] (RAT-) -- (GER-);
    \draw[style=stronger] (RAT-) -- (BC-);
    \draw[style=stronger] (GER-) -- (BC-);
    \draw[style=stronger] (DBC-) -- (GER-);
    \draw[style=stronger] (SBC-) -- (BC-);
    \draw[style=separated, new] (SBC-GER-) -- node[above, sloped] {\scriptsize\cref{thm:GER-from-SBC}} (SBC-);
    \draw[style=separated, new] (SBC-RAT-shleft) -- node[above, sloped] {\scriptsize\cref{thm:RAT-from-SBC}} (SBC-shleft-RAT);
    \draw[style=separated] (SBC-GER-) -- (GER-);
    \draw[style=separated] (SBC-RAT-shleft) -- (RAT-shleft);
    \draw[style=stronger, overdraw, new] (DSBC-) -- node[above, yshift=2pt] {\scriptsize\cref{thm:SBC-lower-bound}} (SBC-);
    \draw[style=equivalent, overdraw, nontriv] (DBC-) -- (DSBC-);
    \draw[style=simulates, overdraw] (DSPR-) -- (SPR-);
    \draw[style=equivalent, overdraw, nontriv] (DSBC-) -- (DSPR-);
    \draw[style=equivalent, overdraw, nontriv] (DRAT-) -- (DSPR-);
    \draw[style=axis] (DBC-) -- ++(0.5, 0);
    \draw[style=axis] (RAT-) -- ++(0, 0.5);
    \draw[style=axis] (SBC-) -- ++(225:0.5);
  \end{tikzpicture}
  \caption{In the above diagram, the proof systems are placed in
    three-dimensional space with $\BC^-$, the weakest system, at the origin.
    Moving away from the origin along each axis
    corresponds to a particular way of generalizing
    (i.e., strengthening) a proof system.
    The systems prefixed with~``$\mathsf{D}$'' allow
    the arbitrary deletion of a clause as a proof step.
    For systems $P$~and~$Q$, we use
    $P \simulates Q$ to denote that $P$ simulates~$Q$;
    (and $P \ntsimulates Q$ to indicate an ``interesting'' simulation,
    where $P$ is not simply a generalization of~$Q$);
    $P \separated Q$ to denote that $P$ is exponentially separated from~$Q$
    (i.e., there exists an infinite sequence of formulas
    admitting polynomial-size proofs in~$P$
    while requiring exponential-size proofs in~$Q$);
    and $P \stronger Q$ to denote that $P$ both simulates~$Q$
    and is exponentially separated from~$Q$.
    Arrows in \textcolor{customred}{red}
    indicate the relationships that are new in this paper.
    To reduce clutter, some relationships that are implied by transitivity
    are not displayed (e.g., $\DBC^-$ simulates~$\RAT^-$
    and is exponentially separated from it through~$\DRAT^-$).}
  \label{fig:redundancy-landscape}
\end{figure}
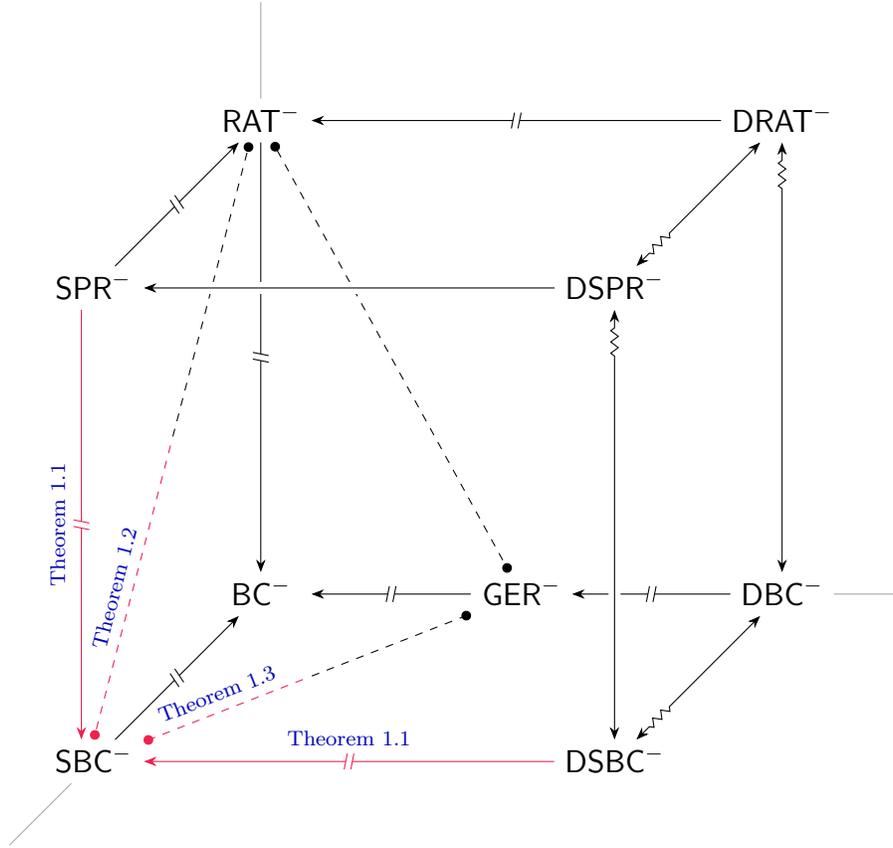

In this paper, we prove the following results, where each formula indexed by~$n$
has $n^{O(1)}$~variables and $n^{O(1)}$~clauses.
\cref{fig:redundancy-landscape} summarizes
the proof complexity landscape around~$\BC^-$ after these results.

We first show exponential lower bounds for $\SBC^-$~proofs
of a binary encoding of the pigeonhole principle
called the ``bit pigeonhole principle,''
defined in \cref{sec:lower-bound-for-BPHP}.
(Note that the usual unary encoding of the pigeonhole principle
admits polynomial-size proofs in~$\SBC^-$~\cite[Lemma~7.1]{YH23}.)

\begin{theorem}\label{thm:SBC-lower-bound}
  The bit pigeonhole principle~$\BPHP_n$
  requires $\SBC^-$~proofs of size~$2^{\Omega(n)}$.
\end{theorem}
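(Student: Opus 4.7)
The plan is an adversary argument that, across the lines of any alleged $\SBC^-$~refutation of~$\BPHP_n$, maintains a partial injective pigeon--hole assignment~$\alpha_i$ satisfying every clause derived so far. Recall that $\BPHP_n$ has roughly $2^n$~pigeons each encoded by $n$ bit variables, together with clauses stating that no two pigeons land in the same hole. I would start with a uniformly random injective assignment on all pigeons (which falsifies exactly one injectivity clause) and let the adversary gradually modify it in response to each inference step so that the accumulating set of derived clauses remains satisfied. If the refutation ends with the empty clause the adversary reaches a contradiction, so the proof must contain many steps on which the adversary is forced to do nontrivial work; the goal is to show that this ``nontrivial work'' requires, in expectation, exponentially many steps.

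The technical core is the analysis of a set-blocking step introducing a new clause~$C$ that is set-blocked by some $L \subseteq C$ with respect to the current formula~$\Gamma_i$. By the definition of set-blockedness, flipping the literals of~$L$ in any satisfying assignment of~$\Gamma_i$ yields an assignment that satisfies both $\Gamma_i$ and~$C$, so the adversary only needs to modify~$\alpha_i$ on variables mentioned in~$L$. I would then prove a \emph{locality lemma} asserting that, for any clause~$C$ occurring in a hypothetical subexponential $\SBC^-$~proof of~$\BPHP_n$, every $L$ witnessing set-blockedness of~$C$ touches the bit variables of only a small number of pigeons, so the flip preserves the injectivity of~$\alpha_i$ on the remaining pigeons. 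Combining this invariant with a potential-function or counting argument on injective assignments compatible with the derived clauses, averaged over the randomness in the initial injection, should imply that deriving the empty clause requires $2^{\Omega(n)}$~lines.

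The main obstacle will be establishing the locality lemma. Unlike an ordinary blocked clause, a set-blocked clause permits the simultaneous complementation of several literals and can, in principle, encode nontrivial ``swaps'' of hole assignments between multiple pigeons; this is precisely the flexibility that makes $\SBC$ strictly stronger than~$\BC$ in general. The argument has to exploit the density of the injectivity clauses of~$\BPHP_n$ to show that any $L$ simultaneously mixing the bit variables of many pigeons would be ruled out by some injectivity clause of~$\Gamma_i$ that fails to be tautologically absorbed, forcing $L$ to be confined to the variables of only a bounded number of pigeons. Once such a locality statement is available, the overall bound follows: each proof line shifts the adversary's injective assignment on at most a constant number of pigeons, whereas moving from a generic injective assignment of $2^n + 1$ pigeons to a contradiction requires shifting all of them, giving the stated $2^{\Omega(n)}$~bound.
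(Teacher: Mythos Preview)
Your plan has a genuine gap: the ``locality lemma'' you hope to prove is false, and in fact the opposite is true. The paper shows (Lemma~\ref{thm:SBC-width-lower-bound-for-BPHP}) that \emph{every} set-blocked clause with respect to~$\BPHP_n$ that uses no new variables has pigeon-width~$n+1$, i.e., it mentions \emph{all} pigeons. The argument is short: if $C$ is set-blocked for~$L$ and $x$ is a pigeon mentioned in~$L$, then for any other pigeon~$y$ there is an axiom $D \in \BPHP_n$ containing $\lneg{L(x)} \cup C'(x)$ and mentioning~$y$; the set-blockedness condition then forces either $L$ or $C' = C \setminus L$ to mention~$y$. So the density of the injectivity axioms, which you correctly identify as the relevant structural feature, pushes in exactly the opposite direction from what your adversary needs: it forces $C$ to be global, not local. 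Consequently the flip on~$L$ gives you no control over how many pigeons' hole assignments change, and the ``each step moves only a bounded number of pigeons'' counting argument does not get off the ground.

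There is also a parameter confusion: in the paper $n = 2^k$ is the number of holes, there are $n+1$ pigeons, and each pigeon is encoded with $k = \log n$ bits---not $2^n$ pigeons with $n$ bits each. More substantively, the paper's proof goes in a direction orthogonal to yours. It first observes (Lemma~\ref{thm:WLOG-SBC-preprocesses}) that all set-blocked additions can be pushed to the front, so an $\SBC^-$~proof is a resolution proof of $\BPHP_n \cup \Sigma$ for some set~$\Sigma$ of SBCs. Since every clause in~$\Sigma$ is wide, a random partial matching~$\rho$ of $n/2$ pigeons satisfies each such clause with probability $1 - 2^{-\Omega(n)}$; a union bound over fewer than $2^{n/64}$ clauses then yields a~$\rho$ that kills \emph{all} of~$\Sigma$ and all wide clauses of the resolution part, leaving a narrow resolution refutation of $(\BPHP_n)|_\rho$, which contradicts the standard Prover--Adversary pigeon-width lower bound (Lemma~\ref{thm:resolution-width-lower-bound-for-BPHP}). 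In short, the wideness of SBCs is the friend of the lower bound, not its obstacle.
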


We then show, using constructions that incorporate
guarded extension variables into~$\BPHP_n$,
that $\RAT^-$~and~$\GER^-$ are both exponentially separated from~$\SBC^-$.

\begin{theorem}\label{thm:RAT-from-SBC}
  There exists an infinite sequence~$(\Gamma_n)_{n = 1}^\infty$ of unsatisfiable formulas
  such that $\Gamma_n$ admits $\RAT^-$~proofs of size~$n^{O(1)}$
  but requires $\SBC^-$~proofs of size~$2^{\Omega(n)}$.
\end{theorem}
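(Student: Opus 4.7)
The plan is to apply the guarded-extension-variables strategy of Yolcu and Heule~\cite{YH23}: start with $\BPHP_n$, whose $\SBC^-$ complexity is already exponential by \cref{thm:SBC-lower-bound}, and graft onto it a small number of auxiliary variables, each protected by a ``guard'' literal, in such a way that $\RAT^-$ is able to introduce those variables via the RAT rule and thereafter carry out a short refutation, while $\SBC^-$ gains no benefit from their presence and is forced to essentially refute $\BPHP_n$ directly.

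Concretely, I would define $\Gamma_n$ by taking $\BPHP_n$ and appending, for each extension variable occurring in a known polynomial-size extended-resolution refutation of $\BPHP_n$, a collection of clauses that encode its intended definition but are guarded so that the definition is invisible under the $\SBC^-$ redundancy check. For the upper bound I would exhibit a $\RAT^-$ refutation of size $n^{O(1)}$ by first introducing each extension variable's definition using the resolution-asymmetric-tautology rule (exploiting the fact that, in the presence of the guards, the required asymmetric tautology is witnessed along a single pivot literal, which is exactly what $\RAT^-$ needs), and then replaying the polynomial-size extended-resolution refutation of $\BPHP_n$ as resolution-like steps on the already-introduced variables together with further RAT additions.

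For the lower bound I would show that any $\SBC^-$ refutation of $\Gamma_n$ can be transformed into an $\SBC^-$ refutation of $\BPHP_n$ whose size is at most polynomially larger, from which \cref{thm:SBC-lower-bound} gives the claimed $2^{\Omega(n)}$ bound. The transformation should replace each introduction of a clause mentioning an extension variable by one or more equivalent steps over the variables of $\BPHP_n$ alone, and the crux is to argue that the guards are designed so that every set of literals witnessing set-blockedness in $\Gamma_n$ either fails to use the extension variables nontrivially or collapses under the transformation to a set of literals still witnessing set-blockedness in $\BPHP_n$. Because $\SBC^-$ is not closed under restrictions, one cannot simply substitute constants for the guard variables; instead the transformation must be carried out inference by inference, showing that each $\SBC^-$ step survives.

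The main obstacle will be the design and analysis of the guards: they must simultaneously be permissive enough that the single-literal asymmetric-tautology test used by $\RAT^-$ succeeds for each intended extension definition, and stringent enough that the set-blocked test used by $\SBC^-$ --- which quantifies over all resolvents on \emph{any} literal from a candidate blocking set --- admits no shortcut whose elimination via guard-stripping would give a sub-exponential refutation of $\BPHP_n$. Carefully calibrating this asymmetry between ``single-literal'' and ``set-of-literals'' redundancy, and verifying that the guard-stripping transformation preserves $\SBC^-$-validity of every inference, is the technical heart of the argument.
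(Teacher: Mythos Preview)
Your plan matches the paper's strategy: apply the guarded-extension-variable transformation of~\cite{YH23} to $\BPHP_n$, obtain the $\RAT^-$ upper bound by simulating the short $\ER$ proof (\cref{thm:projection-easy-for-RAT}), and obtain the $\SBC^-$ lower bound by reducing back to $\BPHP_n$ (\cref{thm:projection-hard-for-SBC}). Two points deserve sharpening. First, your assertion that one ``cannot simply substitute constants for the guard variables'' is contrary to what the paper actually does. The enabling observation you are missing is \cref{thm:WLOG-SBC-preprocesses}: set-blockedness is antitone in the ambient formula, so all SBC additions can be pushed to the front of the proof, after which the proof is a resolution refutation of $\cG(\BPHP_n)\cup\Sigma$ with $\Sigma$ a sequence of SBC additions over $\cG(\BPHP_n)$ alone. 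One then restricts by the assignment~$\alpha$ setting every extension variable to~$1$; resolution is closed under restrictions, and the only remaining work is your ``inference by inference'' check that each clause of~$\Sigma|_\alpha$ is still set-blocked. Second, the construction needs no separate guard literals: one takes $\cG(\Gamma)=\Gamma\cup\bigcup_i\bigl[(x_i\lor\Gamma)\cup(\lneg{x_i}\lor\Gamma)\bigr]$, so that the projection of~$\cG(\Gamma)$ onto either literal of each~$x_i$ is~$\Gamma$ itself. The decisive lemma making the check go through is \cref{thm:SBC-projection-onto-literal}: if $C=L\ldor C'$ is set-blocked for~$L$, then $L\cup\lneg{C'}$ satisfies $\proj_{\lneg p}$ of the formula for every~$p\in L$. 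Since $\Gamma$ is unsatisfiable, this forces the blocking set~$L$ to avoid the extension variables entirely, whence $L|_\alpha=L$ and set-blockedness survives the restriction. Identifying these two tools is what turns your outline into a proof.
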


\begin{theorem}\label{thm:GER-from-SBC}
  There exists an infinite sequence~$(\Delta_n)_{n = 1}^\infty$ of unsatisfiable formulas
  such that $\Delta_n$ admits $\GER^-$~proofs of size~$n^{O(1)}$
  but requires $\SBC^-$~proofs of size~$2^{\Omega(n)}$.
\end{theorem}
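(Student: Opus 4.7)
The plan is to instantiate the guarded-extension-variables strategy of Yolcu and Heule~\cite{YH23} with $\BPHP_n$ as the base formula and \cref{thm:SBC-lower-bound} supplying the black-box lower bound. I would define $\Delta_n$ by augmenting $\BPHP_n$ with auxiliary variables $y_1, \ldots, y_m$ together with guard clauses tailored to the pair $(\GER^-, \SBC^-)$: permissive enough that $\GER^-$, by first \emph{temporarily deleting} a few of them, can introduce clauses that turn each $y_i$ into an extension variable of the form $y_i \leftrightarrow F_i$, yet restrictive enough that $\SBC^-$, which lacks deletion, cannot extract any speedup from their presence.

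For the upper bound, I would use the fact that $\BPHP_n$ admits polynomial-size extended-resolution proofs (indeed, polynomial-size proofs already exist in weaker systems such as $\DBC^-$~\cite{BT21}). The $\GER^-$~proof then proceeds step by step: for each $\ER$-extension $y_i \leftrightarrow F_i$, we use the guard clauses to temporarily delete the small obstruction, add the two defining clauses as blocked-clause steps, and restore the guards. After all extension variables are in place, the resolution part of the $\ER$~proof carries over verbatim, yielding an overall $\GER^-$~proof of size $n^{O(1)}$.

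For the lower bound, I would show that any $\SBC^-$~proof of $\Delta_n$ can be transformed, with only polynomial blowup, into an $\SBC^-$~proof of $\BPHP_n$; \cref{thm:SBC-lower-bound} then forces size $2^{\Omega(n)}$. The transformation has to cope with the fact (noted earlier in the paper) that $\SBC^-$ is not closed under restrictions, so one cannot simply apply a restriction killing the auxiliary variables. Instead, I would analyze each step of the hypothetical $\SBC^-$~proof of $\Delta_n$ and argue that, step by step, the guards force any set-blocking witness involving the $y_i$ to decompose into an inference that is already legitimate over $\BPHP_n$, together with bookkeeping that does no useful work.

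The main obstacle I expect lies precisely in that last step: calibrating the guards against \emph{all} possible set-blocking witnesses, not just blocked-clause ones. Because $\SBC^-$ lets a whole set $L \subseteq C$ of literals jointly witness redundancy, a guard design sufficient to neutralize $\BC^-$ need not suffice for $\SBC^-$; one has to rule out joint witnesses that combine $y_i$-literals with $\BPHP_n$-literals in ways that would encode nontrivial case analyses. Showing that the guards admit no such witnesses---i.e., proving the step-by-step purification lemma in full generality---will be the technical heart of the proof.
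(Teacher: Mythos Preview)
Your overall strategy---guarded extension variables over $\BPHP_n$, with \cref{thm:SBC-lower-bound} as the black box---is exactly what the paper does, and your upper-bound sketch is essentially right. Two features of your plan, however, diverge from the paper in ways that matter.

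First, the guard design. You describe a single family of auxiliary variables that will themselves become the extension variables. The paper instead uses \emph{two} layers: the actual extension variables $x_1,\dots,x_{t(\Gamma)}$ (taken from a fixed minimum-size $\ER$-proof of~$\Gamma$), plus a separate family of pairs $y_j,z_j$ for $j\in[m]$. Each $x_i$ is wrapped in clauses $x_i\lor y_j\lor\lneg{z_j}$ and $\lneg{x_i}\lor y_j\lor\lneg{z_j}$, while the pairs are themselves guarded by $\lneg{y_j}\lor z_j$ together with $y_j\lor\Gamma$ and $\lneg{z_j}\lor\Gamma$. The reason for the pairs is that every wrapper clause is blocked for the literal $y_j$ (the sole $\lneg{y_j}$-clause being $\lneg{y_j}\lor z_j$), so $\GER^-$ can re-add all wrappers after inserting the extension clauses; a single-variable guard does not obviously give you an obstruction that is both strong enough against $\SBC^-$ and re-addable as blocked.

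Second, and more substantively, the lower-bound mechanism. You say one cannot ``simply apply a restriction'' because $\SBC^-$ is not closed under restrictions, and you therefore propose a step-by-step purification with polynomial blowup. The paper's route is in fact a restriction argument, enabled by \cref{thm:WLOG-SBC-preprocesses}: since set-blockedness is preserved when clauses are removed from the ambient formula, every $\SBC^-$-proof may be assumed to perform all of its SBC additions first, after which pure resolution---which \emph{is} closed under restrictions---runs. Moreover, the paper does \emph{not} obtain a clean $\size_{\SBC^-}(\Delta_n)\ge\size_{\SBC^-}(\BPHP_n)$. The projection of the construction onto any $x_i$-literal is the satisfiable formula $\bigcup_{j}\{y_j\lor\lneg{z_j}\}$, so $\SBC^-$ \emph{can} put an $x_i$-literal into a blocking set~$L$---but then, via \cref{thm:SBC-projection-onto-literal}, the clause must cover one of the $2^m$ minimal satisfying assignments of that formula. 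If fewer than $2^m$ SBCs are added, a counting argument produces an assignment~$\alpha$ to all auxiliary variables (with $\alpha(y_j)=\alpha(z_j)$ for every~$j$) that satisfies every such clause, and restricting by~$\alpha$ yields an $\SBC^-$-proof of~$\Gamma$. The resulting bound is $\size_{\SBC^-}(\cI_m(\Gamma))\ge\min\bigl\{2^m,\size_{\SBC^-}(\Gamma)\bigr\}$, and one then sets $m=\lceil\log\size_{\SBC^-}(\BPHP_n)\rceil$. So the difficulty you anticipate with ``joint witnesses'' is handled not by decomposing individual inferences but by this counting-plus-restriction argument; a direct polynomial-blowup purification is not what is proved, and it is not clear one exists for this construction.
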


The above results, along with earlier ones,
completely describe the relative strengths
of the weakest generalizations of~$\BC^-$
along each axis in~\cref{fig:redundancy-landscape}.
For lower bounds, this pushes the frontier
to the system called \emph{set propagation redundancy}~($\SPR^-$)~\cite{HKB20}.
We do not define~$\SPR^-$ formally in this paper,
although it may be thought of intuitively
as combining $\SBC^-$~and~$\RAT^-$.
The upper bounds proved by Buss and Thapen~\cite{BT21}
establish~$\SPR^-$ as an interesting target
for proof complexity lower bounds.
(Note that the binary encoding of the pigeonhole principle
that we use to prove exponential lower bounds for~$\SBC^-$
admits polynomial-size proofs in~$\SPR^-$~\cite[Theorem~4.4]{BT21}.)

\section{Preliminaries}
\label{sec:preliminaries}

We assume that the reader is familiar with propositional logic,
proof complexity, resolution, and extended resolution.
We review some concepts to describe our notation.
For notation we follow Yolcu and Heule~\cite{YH23} from which this section is adapted.

We denote the set of strictly positive integers by~$\NN^+$.
For~$n \in \NN^+$, we let $[n] \coloneqq \{1, \dots, n\}$.
For a sequence~$S = (x_1, \dots, x_n)$,
its \emph{length} is~$n$, which we denote by~$\abs{S}$.

\subsection{Propositional logic}
\label{sec:propositional-logic}

We use $0$~and~$1$ to denote~$\false$ and~$\true$, respectively.
A \emph{literal} is a propositional variable or its negation.
A set of literals is \emph{tautological}
if it contains a pair of complementary literals $x$~and~$\lneg{x}$.
A \emph{clause} is the disjunction of a nontautological set of literals.
We use~$\bot$ to denote the empty clause.
We denote by $\bV$~and~$\bL$ respectively
the sets of all variables and all literals.
A \emph{conjunctive normal form formula} (CNF) is a conjunction of clauses.
Throughout this paper, by ``formula'' we mean a CNF\@.
We identify clauses with sets of literals and formulas with sets of clauses.
In the rest of this section we use $C$,~$D$ to denote clauses
and $\Gamma$,~$\Delta$ to denote formulas.

We say $D$ is a \emph{weakening} of~$C$ if $C \subseteq D$.
We denote by~$\var(\Gamma)$ the set of all the variables occurring in~$\Gamma$.

When we know~$C \cup D$ to be nontautological, we write it as~$C \lor D$.
We write~$C \ldor D$ to indicate a \emph{disjoint disjunction},
where $C$~and~$D$ have no variables in common.
We take the disjunction of a clause and a formula as
\begin{equation*}
  C \lor \Delta \coloneqq \{C \lor D \colon
  D \in \Delta \text{ and } C \cup D \text{ is nontautological}\}.
\end{equation*}

An \emph{assignment}~$\alpha$
is a partial function~$\alpha \colon \bV \rightharpoonup \{0,1\}$,
which also acts on literals by letting $\alpha(\lneg{x}) \coloneqq \lneg{\alpha(x)}$.
We identify~$\alpha$ with the set~$\{p \in \bL \colon \alpha(p) = 1\}$,
consisting of all the literals it satisfies.
For a set~$L$ of literals, we let $\lneg{L} \coloneqq \{\lneg{x} \colon x \in L\}$.
In particular, we use~$\lneg{C}$ to denote the smallest assignment
that falsifies all the literals in~$C$.
We say $\alpha$ \emph{satisfies}~$C$, denoted~$\alpha \models C$,
if there exists some~$p \in C$ such that $\alpha(p) = 1$.
We say $\alpha$ satisfies~$\Gamma$ if for all~$C \in \Gamma$ we have $\alpha \models C$.
For~$C$ that $\alpha$ does not satisfy,
the \emph{restriction} of~$C$ under~$\alpha$ is
$C|_\alpha \coloneqq C \setminus \{p \in C \colon \alpha(p) = 0\}$.
Extending this to formulas, the restriction of~$\Gamma$ under~$\alpha$ is
$\Gamma|_\alpha \coloneqq \{C|_\alpha
\colon C \in \Gamma \text{ and } \alpha \not\models C\}$.

We say $\Gamma$~and~$\Delta$ are \emph{equisatisfiable}, denoted~$\Gamma \eqsat \Delta$,
if they are either both satisfiable or both unsatisfiable.
With respect to~$\Gamma$, a clause~$C$ is \emph{redundant}
if $\Gamma \setminus \{C\} \eqsat \Gamma \eqsat \Gamma \cup \{C\}$.
We sometimes write~$\Gamma \cup \{C\}$ as~$\Gamma \land C$.

\subsection{Proof complexity and resolution}
\label{sec:proof-complexity-and-resolution}

For a proof system~$P$ and a formula~$\Gamma$, we define
\begin{equation*}
  \size_P(\Gamma) \coloneqq \min \{\abs{\Pi}
  \colon \Pi \text{ is a $P$-proof of } \Gamma\}
\end{equation*}
if $\Gamma$ is unsatisfiable and $\size_P(\Gamma) \coloneqq \infty$ otherwise.
A proof system~$P$ \emph{simulates}~$Q$ if every~$Q$-proof
can be converted in polynomial time into a~$P$-proof of the same formula.
Proof systems $P$~and~$Q$ are \emph{equivalent} if they simulate each other.
We say $P$ is \emph{exponentially separated} from~$Q$
if there exists some sequence~$(\Gamma_n)_{n=1}^\infty$ of formulas
such that $\size_P(\Gamma_n) = n^{O(1)}$ while $\size_Q(\Gamma_n) = 2^{\Omega(n)}$.
We call such a sequence of formulas \emph{easy} for~$P$ and \emph{hard} for~$Q$.

Let $C \ldor x$ and $D \ldor \lneg{x}$ be clauses,
where $x$ is a variable, such that the set~$C \cup D$ is nontautological.
We call the clause~$C \lor D$ the \emph{resolvent}
of~$C \lor x$ and~$D \lor \lneg{x}$ on~$x$.
We define a resolution proof in a slightly different form than usual:
as a sequence of formulas instead of a sequence of clauses.%
\footnote{The resulting proof system is equivalent
  to the usual version of resolution.}

\begin{definition}
  A \emph{resolution proof} of a formula~$\Gamma$
  is a sequence~$\Pi = (\Gamma_1, \dots, \Gamma_N$) of formulas
  such that $\Gamma_1 = \Gamma$, $\bot \in \Gamma_N$,
  and for all~$i \in [N - 1]$, we have $\Gamma_{i+1} = \Gamma_i \cup \{C\}$,
  where $C$ is either the resolvent of two clauses in~$\Gamma_i$
  or a weakening of some clause in~$\Gamma_i$.
  The size of~$\Pi$ is~$N$.
\end{definition}

We write~$\Res$ to denote the resolution proof system.
A well known fact is that resolution is closed under restrictions:
if $(\Gamma_1, \Gamma_2, \dots, \Gamma_N)$ is a resolution proof of~$\Gamma$,
then for every assignment~$\alpha$,
the sequence $(\Gamma_1|_\alpha, \Gamma_2|_\alpha, \dots, \Gamma_N|_\alpha)$
contains as a subsequence a resolution proof of~$\Gamma|_\alpha$.
This implies in particular the following.

\begin{lemma}\label{thm:resolution-under-restrictions}
  For every formula~$\Gamma$ and every assignment~$\alpha$,
  $\size_\Res(\Gamma|_\alpha) \leq \size_\Res(\Gamma)$.
\end{lemma}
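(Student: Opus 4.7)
The plan is to establish the stronger claim stated in the paragraph preceding the lemma: for any resolution proof $\Pi = (\Gamma_1, \dots, \Gamma_N)$ of $\Gamma$, the termwise-restricted sequence $(\Gamma_1|_\alpha, \dots, \Gamma_N|_\alpha)$ contains a resolution proof of $\Gamma|_\alpha$ as a subsequence. The lemma is then immediate, since a subsequence has length at most $N$, the first element $\Gamma_1|_\alpha = \Gamma|_\alpha$ is the formula to be refuted, and $\bot \in \Gamma_N|_\alpha$ follows from $\bot \in \Gamma_N$ because the empty clause cannot be satisfied.

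To prove the stronger claim, I would induct on~$i$ and, at each step $\Gamma_{i+1} = \Gamma_i \cup \{C\}$, analyze the rule used to derive~$C$. If $\alpha \models C$, the restricted step is simply dropped from the subsequence. Otherwise $\alpha \not\models C$, and I need to show that $C|_\alpha$ can be obtained from $\Gamma_i|_\alpha$ in one legal proof step. When $C$ is a weakening of some $C' \in \Gamma_i$, the inclusion $C' \subseteq C$ forces $\alpha \not\models C'$ and $C'|_\alpha \subseteq C|_\alpha$, so $C|_\alpha$ is a weakening of $C'|_\alpha$. When $C$ is the resolvent of $C' \ldor x$ and $C'' \ldor \lneg{x}$ on pivot~$x$: if $\alpha(x) = 1$, the premise $C' \ldor x$ is satisfied and dropped while $(C'' \ldor \lneg{x})|_\alpha = C''|_\alpha$, and $C|_\alpha = C'|_\alpha \cup C''|_\alpha$ is a weakening of $C''|_\alpha$; the case $\alpha(x) = 0$ is symmetric; and if $\alpha$ is undefined on~$x$, both restricted premises retain the pivot and resolve on~$x$ to yield exactly $C|_\alpha$.

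The argument is essentially bookkeeping and no genuine obstacle arises. The only points worth being careful about are checking that $C|_\alpha$ is nontautological whenever it appears (which follows from $\alpha \not\models C$) and that the premises referenced in each subsequence step are actually present in $\Gamma_i|_\alpha$ (which follows from those premises being unsatisfied by~$\alpha$, as observed in each case above). Applying the subsequence construction and taking its length yields the bound $\size_\Res(\Gamma|_\alpha) \leq N = \size_\Res(\Gamma)$ after choosing $\Pi$ to be a shortest resolution proof of~$\Gamma$.
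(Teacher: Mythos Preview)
Your proposal is correct and follows exactly the approach the paper indicates: the paper does not give a proof but simply asserts the ``well known fact'' that restricting each formula in a resolution proof yields a sequence containing a resolution proof of~$\Gamma|_\alpha$ as a subsequence, and then states the lemma as an immediate consequence. You have supplied the routine case analysis verifying that fact, which the paper omits.
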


A \emph{unit propagation proof} is a resolution proof
where each use of the resolution rule
has as at least one of its premises
a clause that consists of a single literal.
Unit propagation is not complete.
With $\Gamma$ a formula and $L = \{p_1, \dots, p_k\}$ a set of literals,
we write~$\Gamma \vdash_1 L$ to denote that
there exists a unit propagation proof
of~$\Gamma \land \lneg{p_1} \land \dots \land \lneg{p_k}$.

We next define \emph{extended resolution}~($\ER$),
which is a strengthening of resolution.

\begin{definition}\label{def:extension-clauses}
  Let $\Gamma$ be a formula and $p$,~$q$ be arbitrary literals.
  Consider a \emph{new} variable~$x$
  (i.e., not occurring in any one of~$\Gamma$,~$p$,~$q$).
  We refer to $\{\lneg{x} \lor p,\ \lneg{x} \lor q,\ x \lor \lneg{p} \lor \lneg{q}\}$
  as a set of \emph{extension clauses} for~$\Gamma$.
  In this context, we call~$x$ the \emph{extension variable}.
\end{definition}

\begin{definition}
  A formula~$\Lambda$ is an \emph{extension} for a formula~$\Gamma$
  if there exists a sequence~$(\lambda_1, \dots, \lambda_t)$
  such that $\Lambda = \bigcup_{i = 1}^t \lambda_i$,
  and for all~$i \in [t]$, we have that $\lambda_i$ is a set of
  extension clauses for~$\Gamma \cup \bigcup_{j = 1}^{i - 1} \lambda_j$.
\end{definition}

\begin{definition}
  An \emph{extended resolution proof} of a formula~$\Gamma$
  is a pair~$(\Lambda, \Pi)$,
  where $\Lambda$ is an extension for~$\Gamma$
  and $\Pi$ is a resolution proof of~$\Gamma \cup \Lambda$.
  The size of~$(\Lambda, \Pi)$ is~$\abs{\Lambda} + \abs{\Pi}$.
\end{definition}

\subsection{Redundancy criteria}
\label{sec:redundancy-criteria}

We recall the syntactic redundancy criteria that lead to the inference rules we study.
The definitions are taken from Yolcu and Heule~\cite[Section~3]{YH23},
which in turn adapted them from previous works~\cite{Kul99b,JHB12,KSTB18,HKB20,BT21}.

\begin{definition}\label{def:blocked-clause}
  A clause~$C = p \ldor C'$ is a \emph{blocked clause} (BC)
  for the literal~$p$ with respect to a formula~$\Gamma$
  if, for every clause~$D$ of the form~$\lneg{p} \ldor D'$ in~$\Gamma$,
  the set~$C' \cup D'$ is tautological.
\end{definition}

\begin{definition}\label{def:resolution-asymmetric-tautology}
  A clause~$C = p \ldor C'$ is a \emph{resolution asymmetric tautology} (RAT)
  for the literal~$p$ with respect to a formula~$\Gamma$
  if, for every clause~$D$ of the form~$\lneg{p} \ldor D'$ in~$\Gamma$,
  we have $\Gamma \vdash_1 C' \cup D'$.
\end{definition}

\begin{definition}\label{def:set-blocked-clause}
  A clause~$C$ is a \emph{set-blocked clause} (SBC)
  for a nonempty~$L \subseteq C$ with respect to a formula~$\Gamma$
  if, for every clause~$D \in \Gamma$
  with~$D \cap \lneg{L} \neq \varnothing$ and~$D \cap L = \varnothing$,
  the set~$(C \setminus L) \cup (D \setminus \lneg{L})$ is tautological.
\end{definition}

We say $C$ is a BC with respect to~$\Gamma$
if there exists a literal~$p \in C$
for which $C$ is a BC with respect to~$\Gamma$,
and similarly for RAT and SBC\@.
It was shown in previous works~\cite{Kul99b,JHB12,KSTB18}
that BCs, RATs, and SBCs are redundant,
which makes it possible to use them to define proof systems.

\begin{definition}
  A \emph{blocked clauses proof} of a formula~$\Gamma$
  is a sequence~$\Pi = (\Gamma_1, \dots, \Gamma_N$) of formulas
  such that $\Gamma_1 = \Gamma$, $\bot \in \Gamma_N$,
  and for all~$i \in [N - 1]$, we have $\Gamma_{i+1} = \Gamma_i \cup \{C\}$,
  where $C$ is either the resolvent of two clauses in~$\Gamma_i$,
  a weakening of some clause in~$\Gamma_i$,
  or a blocked clause with respect to~$\Gamma_i$.
  The size of~$\Pi$ is~$N$.
\end{definition}

We write~$\BC$ to denote the blocked clauses proof system.
Replacing ``blocked clause'' by ``resolution asymmetric tautology''
in the above definition gives the $\RAT$~proof system.
Replacing it by ``set-blocked clause'' gives the $\SBC$~proof system.%
\footnote{For an $\SBC$~proof to be polynomial-time verifiable,
  every step in the proof that adds a clause~$C$ as set-blocked
  is expected to indicate the subset~$L \subseteq C$ for which $C$ is set-blocked.
  With that said, we leave this requirement out of our definitions to reduce clutter.}
$\RAT$~and~$\SBC$ are two generalizations of~$\BC$, and we now define another,
called \emph{generalized extended resolution}~($\GER$),
which reduces the dependence of the validity of~$\BC$ inferences
on the order of clause additions~\cite[see][Section~1.3]{Kul99b}.
We need to introduce the concept of a blocked extension%
\footnote{Instead of the original definition
  of a blocked extension~\cite[Definition~6.3]{Kul99b},
  we use a convenient characterization~\cite[Lemma~6.5]{Kul99b},
  which is simpler to state, as the definition.}
before we can proceed with the definition of~$\GER$.

\begin{definition}\label{def:blocked-extension}
  A formula~$\Lambda$ is a \emph{blocked extension} for a formula~$\Gamma$
  if there exists a subset~$\Gamma'$ of~$\Gamma$
  and an ordering~$(C_1, \dots, C_r)$
  of all the clauses in~$\Lambda \cup (\Gamma \setminus \Gamma')$
  such that for all~$i \in [r]$ the clause~$C_i$ is blocked
  with respect to~$\Gamma' \cup \bigcup_{j = 1}^{i - 1} \{C_j\}$.
\end{definition}

\begin{definition}
  A \emph{generalized extended resolution proof} of a formula~$\Gamma$
  is a pair~$(\Lambda, \Pi)$,
  where $\Lambda$ is a blocked extension for~$\Gamma$
  and $\Pi$ is a resolution proof of~$\Gamma \cup \Lambda$.
  The size of~$(\Lambda, \Pi)$ is~$\abs{\Lambda} + \abs{\Pi}$.
\end{definition}

Note that the definitions in this section do not prohibit
BCs, RATs, or SBCs with respect to~$\Gamma$
from containing variables not occurring in~$\Gamma$.
We study the variants of $\BC$, $\RAT$, $\SBC$, and $\GER$
that disallow the use of new variables.
A proof of~$\Gamma$ is \emph{without new variables}
if all the variables occurring in the proof are in~$\var(\Gamma)$.
In the case of~$\GER$, this constraint applies to
both the blocked extension and the resolution part:
a proof~$(\Lambda, \Pi)$ of~$\Gamma$ is without new variables
if all the variables occurring in~$\Lambda$~or~$\Pi$ are in~$\var(\Gamma)$.
We use $\BC^-$, $\RAT^-$, $\SBC^-$, and $\GER^-$
to denote the variants without new variables.

\section{Lower bound for the bit pigeonhole principle}
\label{sec:lower-bound-for-BPHP}

Let $n = 2^k$, with~$k \in \NN^+$.
For a propositional variable~$v$,
let us write~$v \neq 0$ and~$v \neq 1$
to denote the literals $v$~and~$\lneg{v}$, respectively.
The \emph{bit pigeonhole principle} is the contradiction stating that
each pigeon in~$[n + 1]$ can be assigned a distinct binary string from~$\{0,1\}^k$,
where we identify strings with holes.
For each pigeon~$x \in [n + 1]$, the variables~$p^x_1, \dots, p^x_k$
represent the bits of the string assigned to~$x$.
More formally, we write the bit pigeonhole principle as
\begin{equation*}
  \BPHP_n \coloneqq
  \bigcup_{\substack{x, y \in [n + 1],\; x \neq y \\ (h_1, \dots, h_k) \in \{0, 1\}^k}}
  \mleft\{\bigvee_{\ell = 1}^k p^x_\ell \neq h_\ell
  \lor \bigvee_{\ell = 1}^k p^y_\ell \neq h_\ell \mright\},
\end{equation*}
which asserts that for all~$x, y \in [n + 1]$ such that $x \neq y$,
the binary strings~$p^x_1 \dots p^x_k$ and~$p^y_1 \dots p^y_k$ are different.%
\footnote{In our asymptotic results that use the bit pigeonhole principle,
  it is tacitly understood that $\BPHP_n$ could be defined for every integer~$n \geq 2$
  (as opposed to only powers of two) by letting $\BPHP_n$ be identical to~$\BPHP_m$,
  where $m$ is the largest power of two not exceeding~$n$.}
We denote by~$P_x$ the set~$\{p^x_1, \dots, p^x_k,
\lneg{p^x_1}, \dots, \lneg{p^x_k}\}$ of all the literals concerning pigeon~$x$.

For a set~$L$ of literals,
its \emph{pigeon-width} is the number of distinct pigeons it mentions,
where a pigeon~$x \in [n + 1]$ is \emph{mentioned} if there exists some~$\ell \in [k]$
such that some literal of the variable~$p^x_\ell$ is in~$L$.
We write~$L(x)$ to denote the set~$L \cap P_x$.
In other words, $L(x)$ is the largest subset of~$L$ that mentions only the pigeon~$x$.

Before proceeding with the $\SBC^-$~lower bound for the bit pigeonhole principle,
we observe the result below, which will also be useful later.
It is deduced in a straightforward way
from the definition of a set-blocked clause,
using the following fact:
if a clause~$C$ is an SBC with respect to a formula~$\Gamma$,
then $C$ is an SBC with respect to every subset of~$\Gamma$.
(Note that a similar result does not necessarily hold for~$\RAT^-$.)

\begin{lemma}\label{thm:WLOG-SBC-preprocesses}
  Without loss of generality,
  all of the set-blocked clause additions in an $\SBC^-$~proof
  are performed before any resolution or weakening steps.
\end{lemma}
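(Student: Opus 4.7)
The plan is a straightforward swap argument based on the monotonicity property stated just before the lemma: if $C$ is an SBC with respect to $\Gamma$, then $C$ is an SBC with respect to every subset of $\Gamma$. Given an arbitrary $\SBC^-$ proof $\Pi = (\Gamma_1, \dots, \Gamma_N)$, I will show that whenever an SBC addition step is immediately preceded by a resolution or weakening step, the two steps can be swapped without changing the proof's validity, its size, or its final formula. Iterating this local swap (bubble-sort fashion) then moves every SBC addition to the front of the proof.

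Concretely, suppose for some $i$ we have $\Gamma_{i+1} = \Gamma_i \cup \{R\}$, where $R$ is obtained by resolution or weakening from clauses of $\Gamma_i$, and $\Gamma_{i+2} = \Gamma_{i+1} \cup \{C\}$, where $C$ is an SBC with respect to $\Gamma_{i+1}$. I would replace these two steps by: first add $C$ to obtain $\Gamma_i \cup \{C\}$, then add $R$ to obtain $\Gamma_i \cup \{C, R\} = \Gamma_{i+2}$. Validity of the new first step follows from the monotonicity property, since $\Gamma_i \subseteq \Gamma_{i+1}$ and $C$ is an SBC with respect to $\Gamma_{i+1}$, hence also with respect to $\Gamma_i$. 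Validity of the new second step is immediate: the two (or one) premises of the resolution/weakening inference producing $R$ still lie in the larger formula $\Gamma_i \cup \{C\}$, so $R$ can be derived there as well.

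After exhaustively applying this swap, the resulting proof has all SBC additions at the beginning, followed by resolution and weakening steps only. The number of steps and the set of clauses occurring in the proof are preserved, so the size of the proof does not increase, and $\bot$ still appears in the final formula. I do not anticipate a serious obstacle: the only thing to verify carefully is that the new variables constraint of $\SBC^-$ is maintained by the rearrangement, which is immediate because swapping does not change the set of variables appearing in any clause of the proof.
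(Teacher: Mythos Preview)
Your proposal is correct and takes essentially the same approach as the paper. The paper does not give a detailed proof but simply observes that the lemma follows from the fact that if $C$ is an SBC with respect to~$\Gamma$ then $C$ is an SBC with respect to every subset of~$\Gamma$; your bubble-sort swap argument is exactly the natural way to formalize this observation, and all the checks you describe go through.
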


\cref{thm:WLOG-SBC-preprocesses} allows us to reduce
$\SBC^-$~lower bounds for a formula~$\Gamma$
to resolution lower bounds for another formula~$\Gamma \cup \Sigma$,
where $\Sigma$ is a set of clauses derivable from~$\Gamma$
by a sequence of set-blocked clause additions without new variables.

A common strategy for proving resolution lower bounds is to
first show that every proof contains some ``complex'' clause
(in our case, a clause of large pigeon-width),
and then argue that the existence of a small proof
implies the existence of another proof where no clause is complex.
The second step typically involves restricting
the clauses of the proof under a suitable assignment.
We work with assignments that correspond to
partial matchings of pigeons to holes,
as in the case of the $\RAT^-$~lower bound
by Buss and Thapen~\cite[Section~5]{BT21}.

We say an assignment~$\rho$ that sets some variables of~$\BPHP_n$
is a \emph{partial matching} if $\rho$~sets
all of the bits for the pigeons it mentions
in such a way that no two pigeons are in the same hole,
thus representing a matching of pigeons to holes.
To prove $\SBC^-$~lower bounds for~$\BPHP_n$,
we will need the following pigeon-width lower bound
for resolution proofs of restrictions of~$\BPHP_n$ under partial matchings,
which is established by a straightforward Adversary strategy
in the Prover--Adversary game~\cite{Pud00}.
We define the pigeon-width of a proof
as the maximum pigeon-width of any clause in the proof.

\begin{lemma}[{\cite[Lemma~5.2]{BT21}}]\label{thm:resolution-width-lower-bound-for-BPHP}
  Let $\rho$ be a partial matching of $m$~pigeons to holes.
  Then every resolution proof of~$(\BPHP_n)|_\rho$
  has pigeon-width at least~$n - m$.
\end{lemma}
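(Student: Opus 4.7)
The plan is to derive the bound from a Prover--Adversary game in the style of Pudlák. I first recall the standard fact that from a resolution refutation of pigeon-width~$w$, Prover has a winning strategy whose maintained partial assignment~$\alpha$ always has pigeon-width at most~$w$: Prover walks backward from~$\bot$ through the proof, at each point keeping $\alpha$ equal to the unique assignment falsifying the current clause (so the pigeon-width of $\alpha$ equals the pigeon-width of that clause), simulating a resolution step by querying the pivot variable and a weakening step by forgetting the newly-dropped literals. The main task is therefore to design an Adversary strategy that wins whenever the pigeon-width of~$\alpha$ stays strictly below~$n - m$.

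The Adversary strategy I have in mind is the obvious ``honest matcher.'' Alongside $\alpha$, the Adversary maintains a partial matching~$\sigma$ extending~$\rho$ whose domain is exactly the set of pigeons currently mentioned by~$\alpha$ and whose image consists of pairwise distinct holes. On a query of $p^x_\ell$: if $x \in \mathrm{dom}(\sigma)$, respond with the $\ell$th bit of~$\sigma(x)$; otherwise, pick any hole not yet in the image of~$\sigma$, extend $\sigma$ by mapping $x$ to that hole, and respond accordingly. On a forget step that removes the last remaining bit of some pigeon~$x$ from~$\alpha$, drop $x$ from $\sigma$, freeing its hole for later.

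Because $\sigma$ is always an honest matching and every response agrees with it, $\alpha$ is consistent with a full matching extending~$\rho$, so no axiom of~$\BPHP_n$ (which merely forbids two pigeons from sharing a binary string) is ever falsified; the same then holds for~$(\BPHP_n)|_\rho$. The strategy fails only when the Adversary must commit a newly-queried pigeon to a hole but all $n$ holes are already in the image of~$\sigma$---which requires the pigeon-width of~$\alpha$ plus~$m$ to equal~$n$, i.e., pigeon-width of~$\alpha$ to be at least~$n - m$. Combined with the Prover simulation above, this forces $w \geq n - m$ for any resolution refutation of~$(\BPHP_n)|_\rho$.

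The only point that requires a little care is the pigeon-width (rather than ordinary width) version of Pudlák's Prover simulation: one must check that after each backward step the support of the partial assignment stays within the variables of the current clause, so that its pigeon-width matches that of the clause rather than exceeding it by~$O(1)$. I expect this to be the main piece of bookkeeping, but it is routine once one tracks variables by pigeon rather than as a flat set.
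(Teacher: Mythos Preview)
Your proposal is correct and matches the paper's approach exactly: the paper does not give its own proof but cites the lemma from Buss--Thapen and notes that it ``is established by a straightforward Adversary strategy in the Prover--Adversary game,'' which is precisely the honest-matcher argument you describe. The only minor slip is that $\mathrm{dom}(\sigma)$ should be $\mathrm{dom}(\rho)$ together with the pigeons mentioned by~$\alpha$ (not just the latter), but this is cosmetic and your counting already implicitly uses the correct version.
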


We will additionally need a pigeon-width lower bound for
set-blocked clauses (without new variables)
with respect to~$\BPHP_n$, which follows from a simple inspection.

\begin{lemma}\label{thm:SBC-width-lower-bound-for-BPHP}
  Every set-blocked clause with respect to~$\BPHP_n$
  that is without new variables has pigeon-width~$n + 1$.
\end{lemma}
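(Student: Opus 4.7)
The plan is to argue by contradiction. Suppose $C$ is set-blocked for some nonempty $L \subseteq C$ with respect to $\BPHP_n$, uses only the variables $p^x_\ell$, and yet has pigeon-width at most $n$. Then some pigeon $z \in [n + 1]$ is not mentioned in $C$. Fix any literal $p \in L$ and let $x$ be the pigeon it mentions (necessarily $x \neq z$). I would exhibit a clause $D \in \BPHP_n$ violating the set-blocked condition: specifically, the clause asserting that pigeons $x$ and $z$ are not both assigned a specially chosen binary string $h \in \{0, 1\}^k$.

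The string $h$ is built coordinate by coordinate. The bit $h_\ell$ is forced so that $\lneg{p}$ appears in $D$; this automatically gives $\lneg{p} \in D \cap \lneg{L}$, while $\lneg{p} \notin L$ because $C$ (and hence $L$) is nontautological. For each remaining bit $\ell' \in [k] \setminus \{\ell\}$, a short case analysis on whether $p^x_{\ell'}$ or $\lneg{p^x_{\ell'}}$ lies in $L$ and whether either lies in $C \setminus L$ shows that at least one of the two values of $h_{\ell'}$ simultaneously (i)~keeps the literal of variable $p^x_{\ell'}$ that $D$ contains out of $L$ and (ii)~prevents $p^x_{\ell'}$ and $\lneg{p^x_{\ell'}}$ from both appearing in $(C \setminus L) \cup (D \setminus \lneg{L})$. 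The literals of pigeon $z$ that $D$ contains are unconstrained, because $z$ is not mentioned in $C$ at all.

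For the resulting $D$, one then checks that $D \cap L = \varnothing$ and that $\lneg{p} \in D \cap \lneg{L}$, yet $(C \setminus L) \cup (D \setminus \lneg{L})$ is nontautological: variables of pigeon $z$ receive literals only from $D$ (a single clause), variables of pigeons other than $x$ and $z$ receive literals only from $C \setminus L$ (a subset of a single clause), and variables of pigeon $x$ are handled by the bit-wise construction of $h$. This contradicts \cref{def:set-blocked-clause} and establishes that $C$ must have pigeon-width $n + 1$. The main obstacle is making the per-bit case analysis airtight: for each $\ell' \neq \ell$, there are several cases determined by the membership of the two literals of $p^x_{\ell'}$ in $L$ and in $C \setminus L$, and one must check in every case that some value of $h_{\ell'}$ meets both constraints at once.
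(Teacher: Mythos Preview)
Your proposal is correct and follows essentially the same approach as the paper: both arguments fix a pigeon~$x$ mentioned in~$L$, pick another pigeon not yet known to be mentioned in~$C$, and exhibit a clause~$D \in \BPHP_n$ on those two pigeons whose $x$-part extends~$\lneg{L(x)} \cup (C \setminus L)(x)$, so that the set-blocked condition forces~$C$ to mention the second pigeon. The paper avoids your per-bit case analysis by noting in one line that $\lneg{L(x)} \cup (C \setminus L)(x)$ is automatically nontautological (because $L$ and $C \setminus L$ share no variables), hence is contained in the $x$-part of some $\BPHP_n$ clause mentioning the other pigeon---this single choice of~$D$ immediately yields $D \cap L(x) = \varnothing$ and $(C \setminus L)(x) \cup (D \setminus \lneg{L}) \subseteq D$ nontautological, which is exactly what your bit-by-bit construction achieves.
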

\begin{proof}
  Let $C = L \ldor C'$ be a set-blocked clause for~$L$
  with respect to~$\BPHP_n$ that is without new variables.
  Let $x$ be a pigeon mentioned in~$L$.
  Such a pigeon exists since $L$ is nonempty.
  Let $y$ be a pigeon different from~$x$.
  We claim that $C$ mentions~$y$.

  Let $D \in \BPHP_n$ be a clause
  that contains~$\lneg{L(x)} \cup C'(x)$ and mentions~$y$.
  Such a clause exists since $\lneg{L(x)} \cup C'(x)$ is
  simply a nontautological subset of~$P_x$
  and, by the definition of~$\BPHP_n$,
  each such subset is contained in
  some clause in~$\BPHP_n$ that mentions~$y$.
  Note that every clause in~$\BPHP_n$ mentions exactly two pigeons;
  in particular, the clause~$D$ mentions only the pigeons $x$~and~$y$.

  Since $D$ is a clause, it is nontautological.
  As a consequence, $D \cap L(x)$ is empty
  and $C'(x) \cup (D \setminus \lneg{L})$ is nontautological.
  Then, since $C$ is set-blocked for~$L$ with respect to~$\BPHP_n$,
  either $D \cap (L \setminus L(x))$ is nonempty
  or $(C' \setminus C'(x)) \cup (D \setminus \lneg{L})$ is tautological.
  Now, neither of $L \setminus L(x)$~and~$C' \setminus C'(x)$ mentions~$x$.
  Since $D$ mentions only the pigeons $x$~and~$y$,
  the pigeon~$y$ must be mentioned by~$L$ in the former case and~$C'$ in the latter.
  Either way, $C$ mentions~$y$.
\end{proof}

\begin{theorem}\label{thm:SBC-size-lower-bound-for-BPHP}
  The formula~$\BPHP_n$ requires exponential-size proofs in~$\SBC^-$.
\end{theorem}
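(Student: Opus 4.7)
The plan is to reduce the desired $\SBC^-$~size lower bound to the pigeon-width lower bound for resolution given by \cref{thm:resolution-width-lower-bound-for-BPHP}, by means of a preprocessing step followed by a random partial matching argument. Using \cref{thm:WLOG-SBC-preprocesses}, I assume the given $\SBC^-$~proof of~$\BPHP_n$ consists of a sequence of set-blocked clause additions yielding a formula $\BPHP_n \cup \Sigma$, followed by a resolution proof of it. If the entire proof has size~$S$, then $\abs{\Sigma} \leq S$ and the resolution part uses at most~$S$~clauses. Since a clause that is set-blocked with respect to some~$\Gamma$ remains set-blocked with respect to every subset of~$\Gamma$, each $C \in \Sigma$ is in fact set-blocked with respect to~$\BPHP_n$ itself, so \cref{thm:SBC-width-lower-bound-for-BPHP} implies that every clause in~$\Sigma$ has pigeon-width~$n + 1$.

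I would then run a random partial matching argument. Sample a uniformly random partial matching~$\rho$ that matches a set~$M \subseteq [n+1]$ of $m \coloneqq n/2$ pigeons to distinct holes. For a fixed clause~$C$ of pigeon-width~$w$, writing $\pi(C)$ for the set of pigeons mentioned in~$C$, the random variable $\abs{M \cap \pi(C)}$ is hypergeometric with mean $\Theta(wm/n)$, and by Chernoff-style concentration it is $\Omega(wm/n)$ except with probability $2^{-\Omega(wm/n)}$. Conditional on $\abs{M \cap \pi(C)} = t$, the hole assignments to those $t$ pigeons form a uniformly random injection, and each such pigeon~$x$ has at most $2^{k - \abs{C(x)}} \leq n/2$ ``bad'' holes (those falsifying $C(x)$); a direct counting of partial injections into constrained sets then shows that all $t$ pigeons fall into their bad sets with probability at most $2^{-\Omega(t)}$. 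Hence the probability that $\rho$ fails to satisfy~$C$ is $2^{-\Omega(wm/n)}$, which for $m = n/2$ is $2^{-\Omega(n)}$ for every clause of pigeon-width at least~$n/2$, in particular for every clause in~$\Sigma$.

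Assume for contradiction that $S \leq 2^{cn}$ for a sufficiently small constant~$c$. A union bound over the at most~$2S$ relevant clauses---those in~$\Sigma$ together with those of pigeon-width at least~$n/2$ appearing in the resolution part---produces a partial matching~$\rho$ of size~$n/2$ that satisfies all of them simultaneously. After deleting satisfied clauses, the restricted proof becomes a resolution proof of~$(\BPHP_n)|_\rho$ in which every clause has pigeon-width strictly less than~$n/2$. But \cref{thm:resolution-width-lower-bound-for-BPHP} forces pigeon-width at least $n - m = n/2$, contradicting this, so $S \geq 2^{\Omega(n)}$.

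The main obstacle is making the concentration-plus-counting argument rigorous in the presence of the injectivity constraint on~$\rho$: one has to bound, in the worst arrangement of the bad hole sets across the $t$ constrained pigeons, the number of partial injections sending each constrained pigeon into its bad set, and show that this is exponentially smaller than the total number of partial injections of the same size. This reduces to a calculation with falling factorials in which the exponential gap ultimately comes from the fact that each bad set has size at most half the hole universe.
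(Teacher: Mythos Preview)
Your proposal is correct and follows the same high-level strategy as the paper: normalize via \cref{thm:WLOG-SBC-preprocesses}, use \cref{thm:SBC-width-lower-bound-for-BPHP} to ensure every clause in~$\Sigma$ is wide, argue that a random partial matching of~$n/2$ pigeons satisfies every wide clause except with probability~$2^{-\Omega(n)}$, take a union bound, and restrict to contradict \cref{thm:resolution-width-lower-bound-for-BPHP}.

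The only genuine difference is in how the random matching is analyzed. The paper builds~$\rho$ sequentially---at each of the first $n/4$ steps it picks an unmatched pigeon and an unoccupied hole uniformly at random---and observes that at each such step, conditionally on not yet having satisfied the wide clause~$C$, the probability of satisfying~$C$ is at least~$1/16$ (the chosen pigeon is mentioned in~$C$ with probability at least~$1/4$, and then at least~$n/4$ of the available holes satisfy a literal of~$C$). This gives $\Pr[\rho \not\models C] \le (1 - 1/16)^{n/4}$ directly, with no separate concentration step and no falling-factorial computation. Your two-stage analysis (a hypergeometric tail bound for $\abs{M \cap \pi(C)}$, followed by bounding the probability that a uniform injection sends each of~$t$ pigeons into its bad set) is valid---the product $\prod_{i=1}^{t} (n/2)/(n-i+1)$ is indeed $2^{-\Omega(t)}$ for all $t \le n/2$, its geometric mean being at most roughly~$e/4$---but this is precisely the ``main obstacle'' you flag, and the paper's sequential-revelation argument simply sidesteps it. So your route works; the paper's is shorter and avoids the technical burden you identified.
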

\begin{proof}
  Let $\Pi$ be an $\SBC^-$~proof of~$\BPHP_n$ of size~$N$.
  By \cref{thm:WLOG-SBC-preprocesses}, we may view~$\Pi$
  as a resolution proof of the formula~$\BPHP_n \cup \Sigma$,
  where $\Sigma$ is a set of clauses derivable from~$\BPHP_n$
  by a sequence of set-blocked clause additions without new variables.
  We will show by the probabilistic method that if $N < 2^{n / 64}$,
  then there exists a partial matching~$\rho$ of $n / 2$~pigeons to holes
  such that $(\BPHP_n)|_\rho$ has a resolution proof
  of pigeon-width strictly less than~$n / 2$.

  Let $R$ be a random partial matching constructed by
  choosing a random pigeon and assigning it to a random available hole
  until $n / 2$~pigeons are matched to holes.
  We denote by $r_i$ the random assignment
  performed at the $i$th~step of this process:
  if pigeon~$x$ was assigned to hole~$(h_1, \dots, h_k)$,
  then $r_i(p^x_\ell) = h_\ell$ for all~$\ell \in [k]$.

  We say a clause is \emph{wide} if it has pigeon-width at least~$n / 2$.
  Let $C$ be a wide clause.
  Let $x$ denote the $i$th pigeon chosen when constructing~$R$, with $i \leq n / 4$.
  The probability that $x$ is mentioned by~$C$ is at least
  \begin{equation*}
    \frac{n/2 - (n/4 - 1)}{n + 1} \geq 1/4.
  \end{equation*}
  Suppose that $x$ is mentioned by~$C$ through a literal~$p$.
  When $x$ is about to be assigned to a hole, there are
  at least $n / 2 - (n / 4 - 1) \geq n / 4$ available ones
  that would result in~$r_i$ satisfying~$p$.
  Therefore, the conditional probability that $r_i$ satisfies~$C$
  given that the assignments~$r_1, \dots, r_{i - 1}$ do not satisfy~$C$
  is at least~$1 / 16$.
  As a result,
  \begin{equation*}
    \Pr[R \not\models C] < (1 - 1 / 16)^{n / 4} \leq 2^{-n / 64}.
  \end{equation*}

  Suppose that $N < 2^{n / 64}$.
  Let $\Delta$ be the set of all the wide clauses appearing in~$\Pi$.
  By the union bound, $\Pr[R \not\models \Delta] < 1$.
  Thus, there exists a partial matching~$\rho$
  of $n / 2$~pigeons to holes such that $\rho \models \Delta$.
  Also, observe that $\rho \models \Sigma$
  because we have $\Sigma \subseteq \Delta$
  by \cref{thm:SBC-width-lower-bound-for-BPHP}.

  Since resolution is closed under restrictions,
  when we restrict the proof~$\Pi$ under~$\rho$ we obtain
  a resolution proof of~$(\BPHP_n \cup \Sigma)|_\rho = (\BPHP_n)|_\rho$
  without any wide clauses, which contradicts
  \cref{thm:resolution-width-lower-bound-for-BPHP}.
\end{proof}

\section{Separations using guarded extension variables}
\label{sec:separations-using-guarded-extension-variables}

From this point on, given a formula~$\Gamma$, we use $(\Lambda, \Pi)$
to denote the minimum-size $\ER$~proof of~$\Gamma$,%
\footnote{We refer to \emph{the} minimum-size proof
  with the assumption of having fixed some way
  of choosing a proof among those with minimum size.}
where $\Lambda$ is the union of a sequence of
$t(\Gamma) \coloneqq \abs{\Lambda}/3$~sets of extension clauses
such that the $i$th~set~$\lambda_i$ is of the form
$\{\lneg{x_i} \lor p_i,\ \lneg{x_i} \lor q_i,\ x_i \lor \lneg{p_i} \lor \lneg{q_i}\}$.
We thus reserve~$\left\{x_1, \dots, x_{t(\Gamma)}\right\}$
as the set of extension variables used in~$\Lambda$.
We assume without loss of generality
that the variables of $p_i$~and~$q_i$ are
in~$\var(\Gamma) \cup \{x_1, \dots, x_{i-1}\}$ for all~$i \in [t(\Gamma)]$.

\subsection{Separation of~\psnnv{RAT} from~\psnnv{SBC}}
\label{sec:RAT-from-SBC}

Let $\Gamma$ be a formula and $(\Lambda, \Pi)$ be
the minimum-size $\ER$~proof of~$\Gamma$ as described above.
Consider the transformation
\begin{equation}\label{eq:unsatisfiable-projection-transformation}
  \cG(\Gamma) \coloneqq \Gamma
  \cup \bigcup_{i = 1}^{t(\Gamma)} \bigl[
  (x_i \lor \Gamma)
  \cup (\lneg{x_i} \lor \Gamma)
  \bigr],
\end{equation}
where $x_1, \dots, x_{t(\Gamma)}$ are the extension variables used in~$\Lambda$.

It becomes possible to prove~$\cG(\Gamma)$ in~$\RAT^-$
by simulating the $\ER$~proof of~$\Gamma$
using the extension variables present in the formula,
resulting in the following.

\begin{lemma}[{\cite[Lemma~5.1]{YH23}}]\label{thm:projection-easy-for-RAT}
  For every formula~$\Gamma$,
  $\size_{\RAT^-}(\cG(\Gamma)) \leq \size_{\ER}(\Gamma)$.
\end{lemma}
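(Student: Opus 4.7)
The plan is to turn the minimum-size $\ER$-proof $(\Lambda,\Pi)$ of $\Gamma$ into a $\RAT^-$-proof of $\cG(\Gamma)$ of size $|\Lambda|+|\Pi|$. The extension variables $x_1,\dots,x_{t(\Gamma)}$ already occur in $\cG(\Gamma)$ through the guard clauses $x_i \lor \Gamma$ and $\lneg{x_i} \lor \Gamma$, so using them in a proof of $\cG(\Gamma)$ does not introduce any new variables. The strategy is to first reconstruct the extension $\Lambda$ by deriving its clauses one by one as RATs, and then to run the original resolution part $\Pi$ unchanged, which is valid because $\Gamma \cup \Lambda \subseteq \cG(\Gamma) \cup \Lambda$.

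The heart of the argument is verifying the RAT condition for each clause of $\lambda_i = \{\lneg{x_i} \lor p_i,\ \lneg{x_i} \lor q_i,\ x_i \lor \lneg{p_i} \lor \lneg{q_i}\}$, processed in order for $i=1,\dots,t(\Gamma)$. Since the variables of $p_j,q_j$ lie in $\var(\Gamma)\cup\{x_1,\dots,x_{j-1}\}$, the variable $x_i$ appears in no $\lambda_j$ with $j<i$; so at step $i$ the only clauses containing $x_i$ are the guards $x_i \lor D_0$ for $D_0 \in \Gamma$, and the only clauses containing $\lneg{x_i}$ are the guards $\lneg{x_i} \lor D_0$ together with whichever members of $\lambda_i$ have already been added. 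The crucial observation is that each RAT check then reduces to an immediate unit-propagation contradiction: for a guard clause $x_i \lor D_0$, the check for $\lneg{x_i} \lor p_i$ (added as a RAT on $\lneg{x_i}$) needs the current formula together with $\lneg{p_i} \land \lneg{D_0}$ to propagate to $\bot$, and this is trivial because the original $D_0 \in \Gamma$ is falsified outright by $\lneg{D_0}$. The same reasoning handles $\lneg{x_i} \lor q_i$ and, in the check for $x_i \lor \lneg{p_i} \lor \lneg{q_i}$ (added as a RAT on $x_i$), the guards $\lneg{x_i} \lor D_0$; the remaining cases for that last clause---checking against the just-added $\lneg{x_i} \lor p_i$ and $\lneg{x_i} \lor q_i$---are trivial because $p_i$ (respectively $q_i$) and its negation are both forced true by the assumed assignment.

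Concatenating the $|\Lambda|$ RAT additions with the resolution steps of $\Pi$ gives a $\RAT^-$-proof of $\cG(\Gamma)$ of total length $|\Lambda|+|\Pi|=\size_{\ER}(\Gamma)$. I do not foresee a real obstacle beyond carefully tracking which clauses contain $x_i$ or $\lneg{x_i}$ at the moment each RAT check is made; the whole construction is essentially ``the guards let unit propagation mimic the semantic equivalences that $\ER$ invokes when introducing extension clauses,'' which is exactly what the definition of $\cG(\Gamma)$ was engineered to enable.
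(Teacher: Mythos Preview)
Your proposal is correct. The paper does not give its own proof of this lemma; it simply imports the result from~\cite[Lemma~5.1]{YH23}. Your argument is precisely the intended one: derive the extension clauses of $\Lambda$ in order as RATs on the literal of the fresh extension variable, using that every RAT check against a guard clause $x_i\lor D_0$ (or $\lneg{x_i}\lor D_0$) reduces to unit-propagating $\lneg{D_0}$ against $D_0\in\Gamma$, and that the checks against the already-added members of $\lambda_i$ are tautological; then append the resolution steps of $\Pi$, which remain valid because $\Gamma\cup\Lambda\subseteq\cG(\Gamma)\cup\Lambda$. The size count $1+|\Lambda|+(|\Pi|-1)=|\Lambda|+|\Pi|=\size_{\ER}(\Gamma)$ is exactly right under the paper's convention that proof size is the length of the sequence of formulas.
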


For~$\SBC^-$, the formula~$\cG(\Gamma)$ is at least as hard as~$\Gamma$.
We need the following definition before we prove this fact.

\begin{definition}\label{def:projection}
  The \emph{projection} of a formula~$\Gamma$ onto a literal~$p$ is the formula
  \begin{equation*}
    \proj_p(\Gamma) \coloneqq \{C \setminus \{p\} \colon
    C \in \Gamma \text{ and } p \in C\}.
  \end{equation*}
\end{definition}

Our main tool in proving $\SBC^-$~lower bounds against
the constructions that incorporate guarded extension variables
is a version of the following characterization of blocked clauses,
which was already observed by Kullmann~\cite[see][Section~4]{Kul99b}.

\begin{lemma}[{\cite[Lemma~3.15]{YH23}}]\label{thm:BC-characterization-projection}
  A clause~$C = p \ldor C'$ is a BC for~$p$ with respect to a formula~$\Gamma$
  if and only if the assignment~$\lneg{C'}$ satisfies~$\proj_{\lneg{p}}(\Gamma)$.
\end{lemma}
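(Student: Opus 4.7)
The plan is to unfold both sides of the biconditional and observe that they coincide. The statement is essentially a syntactic reformulation of the definition of a blocked clause in terms of satisfaction of the projected formula, so no combinatorial machinery should be needed.

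First, I would use the definition of $\proj_{\lneg{p}}$ to rewrite the right-hand side: the clauses of $\proj_{\lneg{p}}(\Gamma)$ are exactly the sets $D'$ such that $\lneg{p} \ldor D' \in \Gamma$, so saying that $\lneg{C'}$ satisfies $\proj_{\lneg{p}}(\Gamma)$ amounts to saying that for every such $D$, the assignment $\lneg{C'}$ satisfies $D'$. This matches the universal quantification in the definition of a BC for $p$, reducing the task to the pointwise equivalence: $C' \cup D'$ is tautological if and only if $\lneg{C'} \models D'$.

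Second, I would analyze both sides of this pointwise equivalence. Since $\lneg{C'}$ is the smallest assignment falsifying every literal of $C'$, it sets a literal $q$ to $1$ precisely when $\lneg{q} \in C'$. Hence $\lneg{C'} \models D'$ iff some $q \in D'$ has $\lneg{q} \in C'$. On the other side, $C'$ and $D'$ are each nontautological because they arise from the clauses $C = p \ldor C'$ and $D = \lneg{p} \ldor D'$, and the variable of $p$ appears in neither by the disjointness of the $\ldor$'s. Thus any complementary pair within $C' \cup D'$ must consist of some literal $q$ in one set and $\lneg{q}$ in the other, which is exactly the same condition. There is no substantive obstacle; the only subtle point is recognizing that because $C'$ and $D'$ are individually nontautological, any tautology in their union must arise from a literal in one side and its complement in the other.
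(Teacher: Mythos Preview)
Your proof is correct: the unfolding of $\proj_{\lneg{p}}(\Gamma)$ matches the quantification in \cref{def:blocked-clause}, and the pointwise equivalence between ``$C' \cup D'$ is tautological'' and ``$\lneg{C'} \models D'$'' follows exactly as you say from the nontautologicality of $C'$ and $D'$ individually. The paper does not give its own proof of this lemma---it is cited from~\cite[Lemma~3.15]{YH23} and stated without argument---so there is no in-paper proof to compare against; your direct definition-chasing is the natural route.
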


\cref{thm:BC-characterization-projection} suffices for proving
$\BC^-$~lower bounds against~$\cG(\Gamma)$~\cite[see][Lemma~5.2]{YH23};
however, for $\SBC^-$~lower bounds we need the following result.

\begin{lemma}\label{thm:SBC-projection-onto-literal}
  If a clause~$C = L \ldor C'$ is an SBC
  for~$L$ with respect to a formula~$\Gamma$,
  then for every~$p \in L$, the assignment~$L \cup \lneg{C'}$
  satisfies~$\proj_{\lneg{p}}(\Gamma)$.
\end{lemma}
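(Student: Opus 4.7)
The plan is to verify the claim directly by unfolding the definition of $\proj_{\lneg{p}}(\Gamma)$ and case-splitting on whether a given clause $D \in \Gamma$ containing~$\lneg{p}$ intersects~$L$ or not. First I would observe that $L \cup \lneg{C'}$ really is a well-defined assignment: since $C = L \ldor C'$ is a disjoint disjunction, $L$ and~$C'$ share no variables, so $L$ and~$\lneg{C'}$ cannot conflict. Next, fix any~$D \in \Gamma$ with~$\lneg{p} \in D$; the goal is to produce a literal in $D \setminus \{\lneg{p}\}$ that is true under $L \cup \lneg{C'}$.

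In the first case, $D \cap L \neq \varnothing$. Pick any~$\ell \in D \cap L$. Since both $p$ and~$\ell$ lie in~$L$ and $L \subseteq C$ is nontautological, $\ell \neq \lneg{p}$, so $\ell \in D \setminus \{\lneg{p}\}$, and $\ell$ is satisfied by~$L$. In the second case, $D \cap L = \varnothing$. Because $p \in L$ and $\lneg{p} \in D$, we have $\lneg{p} \in D \cap \lneg{L}$, so $D$ triggers the SBC condition: the set $(C \setminus L) \cup (D \setminus \lneg{L}) = C' \cup (D \setminus \lneg{L})$ is tautological. Both $C'$ and $D \setminus \lneg{L}$ are individually nontautological (as subsets of clauses), so the tautological pair must straddle the two pieces; that is, there exist $q \in C'$ and $\lneg{q} \in D \setminus \lneg{L}$. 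From $\lneg{q} \notin \lneg{L}$ we get $q \notin L$, hence $q \neq p$ and $\lneg{q} \neq \lneg{p}$, so $\lneg{q} \in D \setminus \{\lneg{p}\}$. Since $q \in C'$, the assignment $\lneg{C'}$ sets $\lneg{q}$ to true, completing the case.

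The only subtle step is the second case: one has to argue that the tautological pair guaranteed by set-blockedness actually crosses between $C'$ and $D \setminus \lneg{L}$ and that the crossing literal on the $D$-side is distinct from $\lneg{p}$. Both facts reduce to the nontautology of~$C$ and~$D$ together with $p \in L$, so there is no real obstacle---this lemma is essentially a bookkeeping exercise generalizing \cref{thm:BC-characterization-projection} from a single blocking literal~$p$ to a blocking set~$L$, where the extra literals of $L$ simply give an additional way to satisfy $D \setminus \{\lneg{p}\}$ in the case $D \cap L \neq \varnothing$.
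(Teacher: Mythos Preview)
Your proof is correct and follows essentially the same approach as the paper's: both fix a clause $D \in \Gamma$ containing~$\lneg{p}$, split on whether $D \cap L$ is nonempty, and in the second case use the tautologicality of $C' \cup (D \setminus \lneg{L})$ together with the nontautologicality of each piece to extract a literal of~$\lneg{C'}$ inside $D \setminus \{\lneg{p}\}$. Your additional remark that $L \cup \lneg{C'}$ is a well-defined assignment is a detail the paper leaves implicit.
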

\begin{proof}
  Suppose that $C = L \ldor C'$ is an SBC for~$L$ with respect to~$\Gamma$,
  let $p \in L$, and let $D' \in \proj_{\lneg{p}}(\Gamma)$.
  Then the clause~$D = \lneg{p} \ldor D'$ is in~$\Gamma$.
  Note that $D \cap \lneg{L}$ is nonempty.
  Then, since $C$ is an SBC for~$L$, either $D \cap L$ is nonempty
  or $C' \cup (D \setminus \lneg{L})$ is tautological.

  \begin{description}
  \item[Case 1:] $D \cap L \neq \varnothing$.
    Since $\lneg{p} \notin L$, the set~$D' \cap L$ also is nonempty;
    therefore $L \models D'$.
  \item[Case 2:] \textit{$C' \cup (D \setminus \lneg{L})$ is tautological.}
    Since neither of $C'$~and~$D \setminus \lneg{L}$ is tautological,
    their union is tautological if and only if
    $\lneg{C'} \cap (D \setminus \lneg{L})$ is nonempty.
    This implies in particular that $\lneg{C'} \cap D'$ is nonempty;
    therefore $\lneg{C'} \models D'$. \qedhere
  \end{description}
\end{proof}

\cref{thm:SBC-projection-onto-literal} implies that
if the projection of a formula onto a literal~$p$ is unsatisfiable,
then, with respect to the formula, no clause is set-blocked
for any set that contains~$\lneg{p}$.

The intuition behind the construction of~$\cG(\Gamma)$ is as follows.
We incorporate the extension variables
into the formula while having~$\Gamma$ be
the projection for each added literal.
Thus, if $\Gamma$ is unsatisfiable,
we render the extension variables useless
in set-blocked clause additions with respect to~$\cG(\Gamma)$
while still allowing~$\RAT^-$ to take advantage of them.
In particular, it becomes unnecessary for a set-blocked clause~$C$
with respect to~$\cG(\Gamma)$ to include
any of the extension variables present in the formula.
This is because every such clause~$C$ has some subset~$C'$
without any of the extension variables
that is still set-blocked with respect to~$\cG(\Gamma)$.
Moreover, since $\Gamma \subseteq \cG(\Gamma)$, the clause~$C'$
is set-blocked also with respect to~$\Gamma$.
The alternative way to use the extension variables in~$\cG(\Gamma)$
is to derive~$x_i$ from~$x_i \lor \Gamma$,
but this involves proving~$\Gamma$.
When $\Gamma$ is hard for~$\SBC^-$, we leave no way for~$\SBC^-$
to make any meaningful use of the extension variables to achieve a speedup.
In the end, an $\SBC^-$~proof of~$\cG(\Gamma)$ might as well
ignore the extension variables present in the formula,
falling back to an $\SBC^-$~proof of~$\Gamma$.

\begin{lemma}\label{thm:projection-hard-for-SBC}
  For every formula~$\Gamma$,
  $\size_{\SBC^-}(\cG(\Gamma)) \geq \size_{\SBC^-}(\Gamma)$.
\end{lemma}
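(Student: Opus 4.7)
The plan is to take a minimum-size $\SBC^-$ proof $\Pi$ of $\cG(\Gamma)$ (assuming $\Gamma$ is unsatisfiable, since otherwise both sides of the stated inequality are infinite), restrict it under the assignment $\tau$ sending every extension variable to $0$, and argue that $\Pi|_\tau$ yields a valid $\SBC^-$ proof of $\Gamma$ of size at most $|\Pi|$. By \cref{thm:WLOG-SBC-preprocesses}, I first rearrange $\Pi$ so that all set-blocked clause additions come up front, producing clauses $C_j = L_j \ldor C_j'$ with $C_j$ set-blocked for $L_j$ with respect to $\Gamma_j \coloneqq \cG(\Gamma) \cup \{C_1, \dots, C_{j-1}\}$, and only then applies resolution and weakening.

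The first key step is to show that each blocking set $L_j$ lies over $\var(\Gamma)$. By \cref{thm:SBC-projection-onto-literal}, a hypothetical extension literal $p \in L_j$ would force the partial assignment $L_j \cup \lneg{C_j'}$ to satisfy $\proj_{\lneg{p}}(\Gamma_j)$, and by the symmetric guarded design of $\cG(\Gamma)$ this projection contains the unsatisfiable $\Gamma$, a contradiction.

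The second and central step verifies that restriction by $\tau$ yields a valid proof. The base formula collapses to $\cG(\Gamma)|_\tau = \Gamma$; any $C_j$ containing some $\lneg{x_m}$ is satisfied by $\tau$ and disappears, while a surviving $C_j$ restricts to $C_j^* \coloneqq C_j \setminus \{x_1, \dots, x_{t(\Gamma)}\}$. I then verify that each surviving $C_j^*$ is set-blocked for $L_j$ with respect to the current restricted formula $\Gamma \cup \{C_i^* : i < j \text{ and } C_i \text{ survives}\}$. For a witness $D$ in this formula, either $D \in \Gamma$---in which case the tautology in $C_j' \cup (D \setminus \lneg{L_j})$ guaranteed by the original set-blocking condition already lies over $\var(\Gamma)$ and thus survives the removal of extension literals---or $D = C_i^*$ for a surviving $C_i$. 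In the latter case the crucial observation is that an extension pair $x_m, \lneg{x_m}$ cannot mediate the tautology: whichever of $C_j'$ or $C_i$ would hold $\lneg{x_m}$ would itself have been satisfied by $\tau$ and removed, contradicting its survival.

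Finally, the resolution and weakening tail of $\Pi$ restricts, by \cref{thm:resolution-under-restrictions}, to a subsequence that still refutes $\Gamma$; the entire restricted proof uses only variables in $\var(\Gamma)$ and has at most $|\Pi|$ clauses. The main obstacle I anticipate is the case analysis for set-blocking preservation above, where the point is that the symmetry of $\cG(\Gamma)$---every extension variable being guarded by $\Gamma$ in both polarities---combined with the uniform choice $\tau \equiv 0$ synchronizes the surviving clauses so that no tautology in $\Pi$ can rely on an extension variable between two survivors.
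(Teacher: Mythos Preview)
Your proposal is correct and follows essentially the same approach as the paper's proof: rearrange via \cref{thm:WLOG-SBC-preprocesses}, use \cref{thm:SBC-projection-onto-literal} together with the unsatisfiability of $\Gamma$ to show the blocking sets $L_j$ avoid extension variables, then restrict and verify by case analysis that set-blockedness survives because any tautology mediated by an extension variable would force one of the two surviving clauses to contain the literal satisfied by the restriction. The only difference is cosmetic---the paper sets the extension variables to~$1$ rather than~$0$---which is immaterial given the symmetry of~$\cG(\Gamma)$ in $x_i$ and $\lneg{x_i}$.
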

\begin{proof}
  When $\Gamma$ is satisfiable, the inequality holds trivially,
  so suppose that $\Gamma$ is unsatisfiable.

  Suppose that $\cG(\Gamma)$ has an $\SBC^-$~proof of size~$N$.
  By \cref{thm:WLOG-SBC-preprocesses}, we may view such a proof
  as a resolution proof of the formula~$\cG(\Gamma) \cup \Sigma$,
  where $\Sigma$ is a set of clauses derivable from~$\cG(\Gamma)$
  by a sequence of set-blocked clause additions without new variables.
  Let $X = \left\{x_1, \dots, x_{t(\Gamma)}\right\}$
  denote the set of extension variables incorporated into~$\cG(\Gamma)$,
  and consider an assignment~$\alpha$ defined as
  \begin{equation*}
    \alpha(v) =
    \begin{cases}
      1 & \text{if } v \in X \\
      \text{undefined} & \text{otherwise}.
    \end{cases}
  \end{equation*}
  By \cref{thm:resolution-under-restrictions}, there exists a resolution proof
  of the formula~$(\cG(\Gamma) \cup \Sigma)|_\alpha = \Gamma \cup \Sigma|_\alpha$
  of size at most~$N - \abs{\Sigma}$.
  We claim that the clauses in~$\Sigma|_\alpha$ can be derived
  in sequence from~$\Gamma$ by set-blocked clause additions,
  which implies that there exists
  an $\SBC^-$~proof of~$\Gamma$ of size at most~$N$.

  Let $S = (C_1, \dots, C_r)$ be the ordering in which
  the clauses of~$\Sigma$ are derived from~$\cG(\Gamma)$.
  We will show that if we restrict each clause in~$S$ under~$\alpha$
  and remove the satisfied clauses, then the remaining sequence of clauses
  can be derived from~$\Gamma$ in the same order by set-blocked clause additions.
  More specifically, the goal is to prove that for all~$i \in [r]$
  such that $\alpha$ does not satisfy~$C_i$,
  the clause~$C_i|_\alpha$ is set-blocked
  with respect to~$\Gamma \cup \Phi_{i - 1}|_\alpha$, where
  \begin{equation*}
    \Phi_{i - 1} \coloneqq \bigcup_{j \in [i - 1]} \{C_j\}.
  \end{equation*}

  Let $i \in [r]$, and consider the clause~$C_i$,
  which we write as~$C$ from this point on.
  Suppose that $\alpha$ does not satisfy~$C$,
  so the variables from~$X$ can occur only negatively in~$C$.
  Let $L \subseteq C$ be a subset for which $C$ is set-blocked
  with respect to~$\cG(\Gamma) \cup \Phi_{i - 1}$.
  We will prove that $C|_\alpha$ is set-blocked for~$L|_\alpha$
  with respect to both $\Gamma$~and~$\Phi_{i - 1}|_\alpha$,
  which, by the definition of a set-blocked clause,
  implies that $C|_\alpha$ is set-blocked
  with respect to~$\Gamma \cup \Phi_{i - 1}|_\alpha$.

  Before proceeding, observe that $L$ cannot contain any variables from~$X$:
  If some~$\lneg{x_i}$ is in~$L$, then the assignment~$L \cup \lneg{(C \setminus L)}$
  satisfies~$\proj_{x_i}(\cG(\Gamma)) = \Gamma$
  by \cref{thm:SBC-projection-onto-literal}.
  Since $\Gamma$ is unsatisfiable, no such assignment exists.
  Therefore, $L$ cannot contain~$\lneg{x_i}$,
  which implies that $L|_\alpha = L$.

  \begin{description}
  \item\hspace{-\labelsep}\ignorespaces
    \textit{$C|_\alpha$ is set-blocked
      for~$L$ with respect to~$\Gamma$}:
    Since $\Gamma \subseteq \cG(\Gamma)$, the clause~$C$
    is set-blocked for~$L$ in particular with respect to~$\Gamma$.
    Noting that the variables from~$X$ do not occur in~$\Gamma$,
    we conclude that $C|_\alpha$ also is set-blocked
    for~$L$ with respect to~$\Gamma$.
  \item\hspace{-\labelsep}\ignorespaces
    \textit{$C|_\alpha$ is set-blocked
      for~$L$ with respect to~$\Phi_{i - 1}|_\alpha$}:
    Consider an arbitrary~$D' \in \Phi_{i - 1}|_\alpha$,
    which is the restriction under~$\alpha$
    of some clause~$D \in \Phi_{i - 1}$ that $\alpha$ does not satisfy.
    Suppose $D' \cap \lneg{L} \neq \varnothing$ and $D' \cap L = \varnothing$.
    We need to show that $(C|_\alpha \setminus L)
    \cup (D' \setminus \lneg{L})$ is tautological.

    Since $D' \subseteq D$, we immediately have $D \cap \lneg{L} \neq \varnothing$.
    Now, recall that the variables from~$X$ do not occur in~$L$,
    and observe that $D'$ is simply $D$~with the variables from~$X$ removed.
    We thus have $D \cap L = \varnothing$.
    Then, because $C$ is set-blocked for~$L$ with respect to~$\Phi_{i - 1}$,
    the set~$E = (C \setminus L) \cup (D \setminus \lneg{L})$ must be tautological.
    A variable that occurs both positively and negatively in~$E$
    cannot be from~$X$, since in that case $\alpha$ would satisfy $C$~or~$D$.
    Therefore, the set~$(C|_\alpha \setminus L)
    \cup (D' \setminus \lneg{L})$ also is tautological. \qedhere
  \end{description}
\end{proof}

Invoking \crefnosort{thm:projection-easy-for-RAT,thm:projection-hard-for-SBC}
with $\Gamma$~as the bit pigeonhole principle gives us the separation.

\begin{theorem}\label{thm:separation-RAT-from-SBC}
  The formula~$\cG(\BPHP_n)$ admits polynomial-size proofs in~$\RAT^-$
  but requires exponential-size proofs in~$\SBC^-$.
\end{theorem}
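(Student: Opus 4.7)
The plan is to combine the two lemmas already established in the section with the known fact that the bit pigeonhole principle has polynomial-size extended resolution proofs.

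For the upper bound, I would first recall (or cite) that $\BPHP_n$ admits $\ER$-proofs of size $n^{O(1)}$. This is standard: $\ER$ simulates Frege, and the binary pigeonhole principle is easy for Frege (via counting arguments analogous to Buss's proof of $\PHP$). Once this is in hand, \cref{thm:projection-easy-for-RAT} gives $\size_{\RAT^-}(\cG(\BPHP_n)) \leq \size_{\ER}(\BPHP_n) = n^{O(1)}$. I would also want to check that $\cG(\BPHP_n)$ itself has polynomial size as a formula, so that ``polynomial-size proof'' is meaningful: $\BPHP_n$ has $O(n^3)$ clauses of width $O(\log n)$, and the construction of $\cG$ prepends a literal to each clause of $\BPHP_n$, once for each of the $t(\BPHP_n) = n^{O(1)}$ extension variables, contributing $n^{O(1)}$ additional clauses in total.

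For the lower bound, the argument is immediate from what has been set up. By \cref{thm:projection-hard-for-SBC}, any $\SBC^-$-proof of $\cG(\BPHP_n)$ has size at least $\size_{\SBC^-}(\BPHP_n)$, and by \cref{thm:SBC-size-lower-bound-for-BPHP} the latter is $2^{\Omega(n)}$. Putting the two halves together yields \cref{thm:separation-RAT-from-SBC}.

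I do not anticipate any real obstacle, since the work has been front-loaded into \cref{thm:SBC-size-lower-bound-for-BPHP}, \cref{thm:projection-easy-for-RAT}, and \cref{thm:projection-hard-for-SBC}. The only non-trivial external input is the $\ER$-upper-bound for $\BPHP_n$, which should be stated explicitly with a citation (e.g., via the Frege simulation and a counting proof of the binary $\PHP$) rather than reproved. If a self-contained argument is preferred, one can instead observe that $\SPR^-$ proves $\BPHP_n$ in polynomial size (already cited in the introduction as \cite[Theorem~4.4]{BT21}) and that $\SPR^-$ is simulated by $\ER$, which also suffices.
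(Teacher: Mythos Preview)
Your proposal is correct and matches the paper's proof essentially line for line: the paper combines \cref{thm:projection-easy-for-RAT} with the $\ER$ upper bound for $\BPHP_n$ (citing precisely \cite[Theorem~4.4]{BT21} together with the fact that $\ER$ simulates $\SPR^-$, i.e., your ``alternative'' route) for the upper bound, and combines \cref{thm:projection-hard-for-SBC} with \cref{thm:SBC-size-lower-bound-for-BPHP} for the lower bound. Your additional remark that $\cG(\BPHP_n)$ has polynomial size is a welcome sanity check that the paper leaves implicit.
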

\begin{proof}
  Buss and Thapen~\cite[Theorem~4.4]{BT21}
  gave polynomial-size proofs of~$\BPHP_n$
  in~$\SPR^-$, which $\ER$ simulates.\footnote{
    It is also possible to deduce the existence of polynomial-size $\ER$~proofs
    of~$\BPHP_n$ from the fact that the pigeonhole principle~($\PHP_n$)
    is easy for~$\ER$~\cite{Coo76}, combined with the observation
    that $\PHP_n$ can be derived from~$\BPHP_n$ in polynomial size in~$\ER$.}
  By \cref{thm:projection-easy-for-RAT}, we have
  \begin{equation*}
    \size_{\RAT^-}(\cG(\BPHP_n)) = n^{O(1)}.
  \end{equation*}
  \crefnosort{thm:SBC-size-lower-bound-for-BPHP,thm:projection-hard-for-SBC} give
  \begin{equation*}
    \size_{\SBC^-}(\cG(\BPHP_n)) = 2^{\Omega(n)}.
  \end{equation*}
  Thus, the bit pigeonhole principle with $\cG$~applied to it
  exponentially separates~$\RAT^-$ from~$\SBC^-$.
\end{proof}

\subsection{Separation of~\psnnv{GER} from~\psnnv{SBC}}
\label{sec:GER-from-SBC}

We proceed in a similar way to the previous section.
Let $\Gamma$ be a formula and $(\Lambda, \Pi)$ be
the minimum-size $\ER$~proof of~$\Gamma$.
Let $m \in \NN^+$, and let
\begin{equation*}
  \left\{y_1, \dots, y_m, z_1, \dots, z_m\right\}
  \subseteq \bV \setminus \var(\Gamma \cup \Lambda)
\end{equation*}
be a set of $2m$~distinct variables.
Consider
\begin{align}\label{eq:blocked-pair-with-unsatisfiable-projection-transformation}
  \begin{split}
    V_m(\Gamma)
    &\coloneqq \bigcup_{i = 1}^{t(\Gamma)} \bigcup_{j = 1}^m
      \{x_i \lor y_j \lor \lneg{z_j},\ \lneg{x_i} \lor y_j \lor \lneg{z_j}\}, \\
    W_m(\Gamma)
    &\coloneqq \bigcup_{j = 1}^m \bigl[
      \{\lneg{y_j} \lor z_j\}
      \cup (y_j \lor \Gamma)
      \cup (\lneg{z_j} \lor \Gamma)
      \bigr], \\
    \cI_m(\Gamma)
    &\coloneqq \Gamma \cup V_m(\Gamma) \cup W_m(\Gamma),
  \end{split}
\end{align}
where $x_1, \dots, x_{t(\Gamma)}$ are the extension variables used in~$\Lambda$.

To prove~$\cI_m(\Gamma)$ in~$\GER^-$ by simulating the $\ER$~proof of~$\Gamma$,
we essentially remove the clauses in~$V_m(\Gamma)$,
derive the extension clauses, and rederive~$V_m(\Gamma)$
by a sequence of blocked clause additions.

\begin{lemma}\label{thm:guarding-blocked-pair-easy-for-GER}
  For every formula~$\Gamma$ and every~$m \in \NN^+$,
  $\size_{\GER^-}(\cI_m(\Gamma)) \leq \size_{\ER}(\Gamma)$.
\end{lemma}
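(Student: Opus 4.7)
The plan is to exhibit a $\GER^-$ proof of $\cI_m(\Gamma)$ that reuses the minimum-size $\ER$ proof $(\Lambda, \Pi)$ of $\Gamma$ essentially verbatim. I would take the blocked extension to be exactly $\Lambda$ and pick $\Gamma' = \Gamma \cup W_m(\Gamma) = \cI_m(\Gamma) \setminus V_m(\Gamma)$ in \cref{def:blocked-extension}, so that the clauses of $V_m(\Gamma)$ are temporarily deleted and later re-derived as blocked alongside $\Lambda$. The intuition is that once $V_m(\Gamma)$ is removed, the extension variables $x_i$ no longer occur anywhere in the accumulated formula, so they are free to be reintroduced through the $\ER$ extension clauses; the residual clauses $\lneg{y_j} \lor z_j$ in $W_m(\Gamma)$ will then serve as guards that permit re-deriving $V_m(\Gamma)$ as blocked.

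Concretely, I would order the clauses of $\Lambda \cup V_m(\Gamma)$ by first listing the extension sets $\lambda_1, \dots, \lambda_{t(\Gamma)}$ in order, and then appending the clauses of $V_m(\Gamma)$ in any order. When $\lambda_i$ is processed, no clause of the accumulated formula contains $x_i$ (neither $\Gamma$, nor $W_m(\Gamma)$, nor any earlier $\lambda_{i'}$), so the standard simulation of an $\ER$ extension step by blocked-clause additions applies: $\lneg{x_i} \lor p_i$ and $\lneg{x_i} \lor q_i$ are blocked on $\lneg{x_i}$, and $x_i \lor \lneg{p_i} \lor \lneg{q_i}$ is blocked on $x_i$. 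For each clause of $V_m(\Gamma)$, which has the form $x_i \lor y_j \lor \lneg{z_j}$ or $\lneg{x_i} \lor y_j \lor \lneg{z_j}$, I would block on the literal $y_j$: the variables $y_j, z_j$ are fresh in $\Gamma \cup \Lambda$, and all other clauses mentioning $y_j$ (those in $(y_j \lor \Gamma)$ and in $V_m(\Gamma)$ itself) contain it only positively, so the sole clause with $\lneg{y_j}$ is the guard $\lneg{y_j} \lor z_j$, and the unique resolvent is tautological since every $V_m(\Gamma)$ clause contains $\lneg{z_j}$.

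For the resolution part of the $\GER^-$ proof, I would reuse $\Pi$ directly: it is a resolution proof of $\Gamma \cup \Lambda \subseteq \cI_m(\Gamma) \cup \Lambda$, and the same sequence of inferences yields a resolution proof of $\cI_m(\Gamma) \cup \Lambda$ of the same length. The resulting $\GER^-$ proof then has total size $\abs{\Lambda} + \abs{\Pi} = \size_{\ER}(\Gamma)$, giving the claimed bound. I do not expect a substantial obstacle: the construction in \eqref{eq:blocked-pair-with-unsatisfiable-projection-transformation} is tailored so that $\lneg{y_j} \lor z_j$ is available as a guard for re-adding $V_m(\Gamma)$, and the only point requiring care is the choice to delete precisely $V_m(\Gamma)$ and retain all of $W_m(\Gamma)$, since the latter is needed both for the blocked re-derivation of $V_m(\Gamma)$ and for reusing $\Pi$ without modification.
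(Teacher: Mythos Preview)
Your proposal is correct and matches the paper's proof essentially step for step: you take $\Gamma' = \Gamma \cup W_m(\Gamma)$, derive the extension sets $\lambda_1,\dots,\lambda_{t(\Gamma)}$ as blocked (since each $x_i$ is fresh in the accumulated formula), then re-derive every clause of $V_m(\Gamma)$ as blocked on $y_j$ via the guard $\lneg{y_j}\lor z_j$, and finally reuse $\Pi$. The only (harmless) imprecision is your closing remark that $W_m(\Gamma)$ is needed ``for reusing $\Pi$ without modification''; in fact $\Pi$ only uses $\Gamma\cup\Lambda$, and the sole reason to retain $W_m(\Gamma)$ in $\Gamma'$ is to have $\lneg{y_j}\lor z_j$ available for the blocked re-derivation of $V_m(\Gamma)$.
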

\begin{proof}
  Let $(\Lambda, \Pi)$ be the minimum-size $\ER$~proof of~$\Gamma$.
  We will show that the clauses in~$\Lambda \cup V_m(\Gamma)$
  can be derived from~$\Gamma \cup W_m(\Gamma)$
  in some sequence by blocked clause additions,
  which implies by \cref{def:blocked-extension}
  that $\Lambda$ is a blocked extension for~$\cI_m(\Gamma)$.

  Recall that extension clauses can be derived in sequence
  by blocked clause additions.
  The formula~$\Lambda$ is an extension for~$\Gamma \cup W_m(\Gamma)$,
  so we derive~$\Lambda$ by such a sequence.
  Next, from~$\Gamma \cup W_m(\Gamma) \cup \Lambda$,
  we derive the clauses in~$V_m(\Gamma)$ in any order.
  Let $V'$ be a proper subset of~$V_m(\Gamma)$,
  and let $C$ be a clause in~$V_m(\Gamma) \setminus V'$.
  For some $i \in [t(\Gamma)]$ and $j \in [m]$,
  the clause~$C$ is of the form~$p \lor y_j \lor \lneg{z_j}$,
  where $p$ is either $x_i$~or~$\lneg{x_i}$.
  With respect to~$\Gamma \cup W_m(\Gamma) \cup \Lambda \cup V'$,
  the clause~$C$ is blocked for~$y_j$
  since the only earlier occurrence of~$\lneg{y_j}$
  is the clause~$\lneg{y_j} \lor z_j$
  and $\{p,\, \lneg{z_j},\, z_j\}$ is tautological.
  It follows by induction that we can derive~$V_m(\Gamma)$
  from~$\Gamma \cup W_m(\Gamma) \cup \Lambda$.
  Thus, $\Lambda$ is a blocked extension for~$\cI_m(\Gamma)$.

  Noting that $\Pi$ is a resolution proof of~$\Gamma \cup \Lambda$
  and that $\cI_m(\Gamma)$ contains~$\Gamma$ as a subset,
  we conclude that there exists a $\GER^-$~proof of~$\cI_m(\Gamma)$
  of size~$\abs{\Lambda} + \abs{\Pi} = \size_\ER(\Gamma)$.
\end{proof}

For $\SBC^-$, the formula~$\cI_m(\Gamma)$
stays at least as hard as~$\Gamma$
if fewer than $2^m$~set-blocked clauses are derived.
As before, our goal is to render the added variables
useless in set-blocked clause additions.

We will eventually choose~$\Gamma$ to be hard for~$\SBC^-$,
which makes the literals $\lneg{y_j}$~and~$z_j$
useless in set-blocked clause additions.
Moreover, the presence of the clause~$\lneg{y_j} \lor z_j$ ensures that
if a clause is set-blocked for a set containing $y_j$~or~$\lneg{z_j}$,
then the clause is a weakening of~$y_j \lor \lneg{z_j}$.
Such clauses are killed by assignments
that set $y_j$~and~$z_j$ to the same value.

We also need to consider the clauses that are set-blocked
for sets containing the variables~$x_i$.
The projection of~$\cI_m(\Gamma)$ onto $x_i$~or~$\lneg{x_i}$
is the formula~$\bigcup_{j = 1}^m \{y_j \lor \lneg{z_j}\}$,
which has $2^m$~minimal satisfying assignments.
Without deriving clauses that rule out all of those assignments,
$\SBC^-$~proofs cannot use the variables~$x_i$ in any meaningful way.
Since the variables $y_j$~and~$z_j$
are rendered useless in set-blocked clause additions,
the assignments can only be ruled out one at a time,
which forces $\SBC^-$~proofs of~$\cI_m(\Gamma)$
to either derive at least~$2^m$ clauses or ignore the variables~$x_i$.

\begin{lemma}\label{thm:guarding-blocked-pair-hard-for-SBC}
  For every formula~$\Gamma$ and every~$m \in \NN^+$,
  \begin{equation*}
    \size_{\SBC^-}(\cI_m(\Gamma)) \geq \min\{2^m, \size_{\SBC^-}(\Gamma)\}.
  \end{equation*}
\end{lemma}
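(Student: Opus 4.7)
The plan is to adapt the strategy of \cref{thm:projection-hard-for-SBC} with a probabilistic argument. Assume $\Gamma$ is unsatisfiable (else the claim is trivial) and that $\cI_m(\Gamma)$ has an $\SBC^-$ proof of size $N < 2^m$ (otherwise we are already done). By \cref{thm:WLOG-SBC-preprocesses}, view this proof as a sequence $\Sigma = (C_1, \dots, C_r)$ of set-blocked clause additions followed by a resolution proof of $\cI_m(\Gamma) \cup \Sigma$. The goal is to find an assignment $\alpha$ on the auxiliary variables such that $\cI_m(\Gamma)|_\alpha = \Gamma$ and the surviving clauses of $\Sigma|_\alpha$ still form a valid SBC derivation from $\Gamma$.

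First I would classify the possible blocking sets $L$ using \cref{thm:SBC-projection-onto-literal}. Since $\proj_{y_j}(\cI_m(\Gamma))$ and $\proj_{\lneg{z_j}}(\cI_m(\Gamma))$ both contain the unsatisfiable $\Gamma$, $L$ can contain neither $\lneg{y_j}$ nor $z_j$. If $y_j \in L$, the projection onto $\lneg{y_j}$ reduces to the unit formula $\{z_j\}$, forcing $\lneg{z_j} \in C \setminus L$; symmetrically if $\lneg{z_j} \in L$, so in either case $C$ is a weakening of $y_j \lor \lneg{z_j}$. Finally, since $\proj_{x_i}(\cI_m(\Gamma)) = \proj_{\lneg{x_i}}(\cI_m(\Gamma)) = \bigcup_{j = 1}^m \{y_j \lor \lneg{z_j}\}$, any $L$ containing a literal on $x_i$ forces $C$ to contain at least one literal on the pair $\{y_j, z_j\}$ for every $j \in [m]$.

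I would then consider a random $\alpha$ defined by $\alpha(x_i) = 1$ for every $i$ and, independently for each $j$, $\alpha(y_j) = \alpha(z_j)$ drawn uniformly from $\{0, 1\}$. A direct case analysis verifies that every such $\alpha$ satisfies $V_m(\Gamma) \cup W_m(\Gamma)$ pointwise, so $\cI_m(\Gamma)|_\alpha = \Gamma$. By the classification, every clause whose blocking set contains $y_j$ or $\lneg{z_j}$ is a weakening of $y_j \lor \lneg{z_j}$ and is satisfied under any $\alpha$ with $\alpha(y_j) = \alpha(z_j)$; every clause with $x_i \in L$ is satisfied via $\alpha(x_i) = 1$; and every remaining clause with $\lneg{x_i} \in L$ contains a witnessing literal on each pair $(y_j, z_j)$, each of which is falsified with probability at most $1/2$ and independently across $j$, so such a clause survives with probability at most $2^{-m}$. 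Since there are fewer than $2^m$ such clauses, a union bound yields an explicit $\alpha$ that satisfies every clause of $\Sigma$ whose blocking set meets any auxiliary variable.

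The surviving clauses have $L \subseteq \var(\Gamma)$, and for each I would argue exactly as in \cref{thm:projection-hard-for-SBC} that $C|_\alpha$ is still set-blocked for $L$ with respect to $\Gamma$ (since $\Gamma$ contains no auxiliary variables) and with respect to the restrictions of the preceding surviving clauses (since a complementary pair witnessing tautology cannot involve a variable $\alpha$ sets without $\alpha$ already satisfying $C$ or the partner clause). Applying \cref{thm:resolution-under-restrictions} to the resolution part then yields an $\SBC^-$ proof of $\Gamma$ of size at most $N$. The main obstacle is the classification step together with the $2^{-m}$ tail bound: one must verify that in the ``$\lneg{x_i} \in L$'' case $C$ has witnessing literals on all $m$ pairs $(y_j, z_j)$, and then handle the per-pair probability carefully enough that the $y_j = z_j$ coupling still yields independent events across $j$ and thus exponential concentration.
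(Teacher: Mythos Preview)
Your approach is correct and essentially matches the paper's: the same classification of blocking sets via \cref{thm:SBC-projection-onto-literal}, the same reduction to an $\SBC^-$~proof of~$\Gamma$ as in \cref{thm:projection-hard-for-SBC}, with the only difference that you locate the assignment~$\alpha$ by randomizing $\alpha(y_j)=\alpha(z_j)$ and applying a union bound, whereas the paper argues deterministically by pigeonhole over the $2^m$ assignments in $\prod_{j}\{y_j,\lneg{z_j}\}$. One small correction: $\alpha$ does not literally satisfy every clause of~$W_m(\Gamma)$ (when $\alpha(y_j)=0$ the clause $y_j\lor D$ restricts to~$D$, and symmetrically for $\lneg{z_j}\lor D$ when $\alpha(z_j)=1$), but since those restrictions already lie in~$\Gamma$ your conclusion $\cI_m(\Gamma)|_\alpha=\Gamma$ is unaffected.
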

\begin{proof}
  Fix some~$m \in \NN^+$.
  When $\Gamma$ is satisfiable, the inequality holds trivially,
  so suppose that $\Gamma$ is unsatisfiable.

  Suppose that $\cI_m(\Gamma)$ has an $\SBC^-$~proof of size~$N$.
  By \cref{thm:WLOG-SBC-preprocesses}, we may view such a proof
  as a resolution proof of the formula~$\cI_m(\Gamma) \cup \Sigma$,
  where $\Sigma$ is a set of clauses derivable from~$\cI_m(\Gamma)$
  by a sequence of set-blocked clause additions without new variables.
  We claim that if $\abs{\Sigma} < 2^m$,
  then there exists an $\SBC^-$~proof of~$\Gamma$ of size at most~$N$.
  This implies the desired lower bound.

  Let $X = \left\{x_1, \dots, x_{t(\Gamma)}\right\}$
  and $U = \left\{y_1, \dots, y_m, z_1, \dots, z_m\right\}$
  denote the two sets of variables incorporated into~$\cI_m(\Gamma)$.
  Consider a clause~$C = L \ldor C'$ that is set-blocked
  for~$L$ with respect to~$\cI_m(\Gamma)$.
  We start by inspecting the ways in which
  the variables from~$X \cup U$ can occur in~$L$:

  \begin{itemize}
  \item We first consider the variables from~$U$.
    If either $\lneg{y_j}$~or~$z_j$ for some~$j \in [m]$ occurs in~$L$,
    then, by \cref{thm:SBC-projection-onto-literal},
    the assignment~$L \cup \lneg{C'}$ satisfies~$\Gamma$
    since $\Gamma$ is contained in the projections
    of~$\cI_m(\Gamma)$ onto the negations of these literals.
    Thus, since $\Gamma$ is unsatisfiable,
    neither of the literals $\lneg{y_j}$~and~$z_j$
    for any~$j \in [m]$ can occur in~$L$.
    Moreover, if the literal~$y_j$ occurs in~$L$,
    then, by \cref{thm:SBC-projection-onto-literal},
    the assignment~$L \cup \lneg{C'}$ satisfies~$z_j$
    since the clause~$\lneg{y_j} \lor z_j$ is in~$\cI_m(\Gamma)$.
    In that case, since $z_j$ cannot occur in~$L$,
    the literal~$z_j$ must occur in~$\lneg{C'}$,
    which implies that $C$ contains the literal~$\lneg{z_j}$.
    Thus, if the literal~$y_j$ occurs in~$L$,
    then $C$ contains the literal~$\lneg{z_j}$.
    By a similar argument, if the literal~$\lneg{z_j}$ occurs in~$L$,
    then $C$ contains the literal~$y_j$.
    To summarize, if some variable from~$U$ is in~$L$,
    then $C$ is a weakening of some clause
    in~$\bigcup_{j = 1}^m \{y_j \lor \lneg{z_j}\}$.
  \item Next, we consider the variables from~$X$.
    For all~$j \in [m]$, define $A_j \coloneqq \{y_j,\, \lneg{z_j}\}$
    (intended to be viewed as an assignment).
    Let $\bA = A_1 \times \dots \times A_m$.
    Suppose that a literal~$p$ of some~$x_i$ is in~$L$.
    Then, by \cref{thm:SBC-projection-onto-literal},
    the assignment~$L \cup \lneg{C'}$ satisfies the formula
    \begin{equation*}
      \proj_{\lneg{p}}(\cI_m(\Gamma)) = \bigcup_{j = 1}^m \{y_j \lor \lneg{z_j}\}.
    \end{equation*}
    This implies that if no variable from~$U$ occurs in~$L$,
    then there exists some assignment~$\beta \in \bA$
    such that $\beta \subseteq \lneg{C'}$.
    We say $C$~is a \emph{good} clause
    if some variable from~$X$ occurs in~$L$
    but no variable from~$U$ occurs in~$L$.
  \end{itemize}

  From this point on, suppose $\abs{\Sigma} < 2^m$.
  For each good clause~$E$ in~$\Sigma$,
  choose a single subset~$F \subseteq E$ such that $\lneg{F} \in \bA$.
  Let $\Delta$ be the collection of those subsets.
  Since $\abs{\Delta} < 2^m$, there exists some~$\beta \in \bA$
  such that $\lneg{\beta} \notin \Delta$.
  Recall that for each~$j \in [m]$, the assignment~$\beta$
  sets exactly one of the variables $y_j$~and~$z_j$.
  Let $\beta'$ be the smallest assignment extending~$\beta$
  such that $\beta'(y_j) = \beta'(z_j)$ for all~$j \in [m]$.

  \begin{claim}\label{thm:beta'-satisfies-XU-SBCs}
    Let $C$ be a clause in~$\Sigma$,
    and let $L \subseteq C$ be a subset
    for which $C$ is set-blocked with respect to~$\cI_m(\Gamma)$.
    If some variable from~$X \cup U$ occurs in~$L$,
    then $\beta'$ satisfies~$C$.
  \end{claim}
  \begin{proof}
    Let $C$ be a clause in~$\Sigma$
    set-blocked for~$L \subseteq C$ with respect to~$\cI_m(\Gamma)$.
    Suppose that some variable from~$X \cup U$ occurs in~$L$.
    Then either $\var(L) \cap U \neq \varnothing$ or $C$ is a good clause.
    \begin{description}
    \item[Case 1:] $\var(L) \cap U \neq \varnothing$.
      Since $C$ is a weakening of some clause
      in~$\bigcup_{j = 1}^m \{y_j \lor \lneg{z_j}\}$
      and $\beta'(y_j) = \beta'(z_j)$ for all~$j \in [m]$,
      the assignment~$\beta'$ satisfies~$C$.
    \item[Case 2:] \textit{$C$ is a good clause.}
      Let $F$ be a subset of~$C$ such that $F \in \Delta$.
      Since $\lneg{\beta} \notin \Delta$,
      there exists some~$j \in [m]$ such that
      either $\lneg{y_j} \in \lneg{\beta}$~and~$z_j \in F$
      or $z_j \in \lneg{\beta}$~and~$\lneg{y_j} \in F$.
      We have $\beta'(z_j) = 1$ in the former case
      and $\beta'(y_j) = 0$ in the latter.
      Either way, $\beta'$ satisfies~$F$ and hence it also satisfies~$C$. \qedhere
    \end{description}
  \end{proof}

  Now, let $\alpha$ be the assignment defined as
  \begin{equation*}
    \alpha(v) =
    \begin{cases}
      1 & \text{if } v \in X \\
      \beta'(v) & \text{if } v \in U \\
      \text{undefined} & \text{otherwise}.
    \end{cases}
  \end{equation*}
  The point of~$\alpha$ is to set all of the variables from~$X \cup U$
  in such a way that kills all of the clauses in~$\Sigma$
  that are set-blocked for sets containing those variables,
  leaving behind only the clauses that could
  also be derived in an $\SBC^-$~proof of~$\Gamma$.

  The rest of the argument is similar at a high level
  to the proof of \cref{thm:projection-hard-for-SBC}, so we will be relatively brief.
  By \cref{thm:resolution-under-restrictions}, there exists a resolution proof
  of the formula~$(\cI_m(\Gamma) \cup \Sigma)|_\alpha = \Gamma \cup \Sigma|_\alpha$
  of size at most~$N - \abs{\Sigma}$.
  We claim that the clauses in~$\Sigma|_\alpha$ can be derived
  in sequence from~$\Gamma$ by set-blocked clause additions.

  As before, let $S = (C_1, \dots, C_r)$ be the ordering in which
  the clauses of~$\Sigma$ are derived from~$\cI_m(\Gamma)$.
  We will prove that for all~$i \in [r]$
  such that $\alpha$ does not satisfy~$C_i$,
  the clause~$C_i|_\alpha$ is set-blocked
  with respect to~$\Gamma \cup \Phi_{i - 1}|_\alpha$, where
  \begin{equation*}
    \Phi_{i - 1} \coloneqq \bigcup_{j \in [i - 1]} \{C_j\}.
  \end{equation*}

  Let $i \in [r]$, and consider the clause~$C_i$,
  which we write as~$C$ from this point on.
  Let $L \subseteq C$ be a subset for which $C$ is set-blocked
  with respect to~$\cI_m(\Gamma) \cup \Phi_{i - 1}$.
  Suppose that $\alpha$ does not satisfy~$C$.
  Noting that $\alpha$ extends~$\beta'$,
  by \cref{thm:beta'-satisfies-XU-SBCs},
  no variable from~$X \cup U$ is in~$L$.
  As a result, $L|_\alpha = L$.
  We will prove that $C|_\alpha$ is set-blocked for~$L$
  with respect to both $\Gamma$~and~$\Phi_{i - 1}|_\alpha$.

  \begin{description}
  \item\hspace{-\labelsep}\ignorespaces
    \textit{$C|_\alpha$ is set-blocked
      for~$L$ with respect to~$\Gamma$}:
    Since $\Gamma \subseteq \cI_m(\Gamma)$, the clause~$C$
    is set-blocked for~$L$ in particular with respect to~$\Gamma$.
    No variable from~$X \cup U$ occurs in~$\Gamma$,
    so the clause~$C|_\alpha$ also is set-blocked
    for~$L$ with respect to~$\Gamma$.
  \item\hspace{-\labelsep}\ignorespaces
    \textit{$C|_\alpha$ is set-blocked
      for~$L$ with respect to~$\Phi_{i - 1}|_\alpha$}:
    Consider an arbitrary~$D' \in \Phi_{i - 1}|_\alpha$,
    which is the restriction under~$\alpha$
    of some clause~$D \in \Phi_{i - 1}$ that $\alpha$ does not satisfy.
    Suppose $D' \cap \lneg{L} \neq \varnothing$ and $D' \cap L = \varnothing$.
    We need to show that $(C|_\alpha \setminus L)
    \cup (D' \setminus \lneg{L})$ is tautological.

    We have $D \cap \lneg{L} \neq \varnothing$ because $D' \subseteq D$.
    Recall that no variable from~$X \cup U$ is in~$L$,
    and observe that $D'$ is simply $D$~with the variables from~$X \cup U$ removed.
    This implies $D \cap L = \varnothing$.
    Now, because $C$ is set-blocked for~$L$ with respect to~$\Phi_{i - 1}$,
    the set~$E = (C \setminus L) \cup (D \setminus \lneg{L})$ must be tautological.
    A variable that occurs both positively and negatively in~$E$
    cannot be from~$X \cup U$, since in that case $\alpha$ would satisfy $C$~or~$D$.
    Therefore, the set~$(C|_\alpha \setminus L)
    \cup (D' \setminus \lneg{L})$ also is tautological. \qedhere
  \end{description}
\end{proof}

Invoking \crefnosort{thm:guarding-blocked-pair-easy-for-GER,%
  thm:guarding-blocked-pair-hard-for-SBC} with a suitable choice of~$m$
and with $\Gamma$~as the bit pigeonhole principle gives us the separation.

\begin{theorem}\label{thm:separation-GER-from-SBC}
  For every unsatisfiable formula~$\Gamma$,
  let $m(\Gamma) \coloneqq \ceil{\log(\size_{\SBC^-}(\Gamma))}$
  and define $\cK(\Gamma) \coloneqq \cI_{m(\Gamma)}(\Gamma)$.
  The formula~$\cK(\BPHP_n)$ admits polynomial-size proofs in~$\GER^-$
  but requires exponential-size proofs in~$\SBC^-$.
\end{theorem}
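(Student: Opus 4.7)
The plan is to instantiate the two lemmas of this subsection with $\Gamma = \BPHP_n$ and $m = m(\BPHP_n) = \ceil{\log(\size_{\SBC^-}(\BPHP_n))}$, and then verify that nothing blows up. With this choice of~$m$, the lower bound from \cref{thm:guarding-blocked-pair-hard-for-SBC} becomes $\min\{2^m, \size_{\SBC^-}(\BPHP_n)\} = \size_{\SBC^-}(\BPHP_n)$, which is $2^{\Omega(n)}$ by \cref{thm:SBC-size-lower-bound-for-BPHP}. Meanwhile, \cref{thm:guarding-blocked-pair-easy-for-GER} yields $\size_{\GER^-}(\cK(\BPHP_n)) \leq \size_{\ER}(\BPHP_n) = n^{O(1)}$, where the polynomial $\ER$ upper bound on~$\BPHP_n$ is inherited from the $\SPR^-$ proofs of Buss and Thapen (which $\ER$ simulates), as already invoked in the proof of \cref{thm:separation-RAT-from-SBC}.

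Before declaring the separation, I would verify that $\cK(\BPHP_n)$ itself is of polynomial size, so that ``polynomial-size proofs'' is measured in the usual way against a polynomial-size formula. First, a trivial brute-force upper bound gives $\size_{\SBC^-}(\BPHP_n) \leq 2^{O(n)}$, so $m = O(n)$; and $t(\BPHP_n) \leq \size_{\ER}(\BPHP_n) = n^{O(1)}$. Inspecting \cref{eq:blocked-pair-with-unsatisfiable-projection-transformation}, we have $\abs{V_m(\BPHP_n)} = O(t(\BPHP_n) \cdot m)$ and $\abs{W_m(\BPHP_n)} = O(m \cdot \abs{\BPHP_n})$, both polynomial in~$n$. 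Together with $\abs{\BPHP_n} = n^{O(1)}$, this shows $\cK(\BPHP_n)$ has $n^{O(1)}$ variables and clauses.

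Combining these two observations finishes the proof. There is no genuine obstacle here: all the technical content is already contained in \cref{thm:SBC-size-lower-bound-for-BPHP,thm:guarding-blocked-pair-easy-for-GER,thm:guarding-blocked-pair-hard-for-SBC}. The only thing to get right is the balancing act in choosing~$m$: taking $m = \ceil{\log(\size_{\SBC^-}(\BPHP_n))}$ is the minimum value that prevents the $2^m$~branch of \cref{thm:guarding-blocked-pair-hard-for-SBC} from being the binding one, and the exponential upper bound on $\size_{\SBC^-}(\BPHP_n)$ keeps~$m$ polynomial so that the guarded formula remains polynomial-size. Together these give the claimed exponential separation of~$\GER^-$ from~$\SBC^-$.
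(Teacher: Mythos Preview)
Your proposal is correct and follows essentially the same approach as the paper: instantiate \cref{thm:guarding-blocked-pair-easy-for-GER,thm:guarding-blocked-pair-hard-for-SBC} with $\Gamma = \BPHP_n$, invoke the $\ER$ upper bound via Buss--Thapen's $\SPR^-$ proofs, and invoke \cref{thm:SBC-size-lower-bound-for-BPHP} for the lower bound. Your added verification that $\cK(\BPHP_n)$ has polynomial size is a useful sanity check the paper leaves implicit; note only that the trivial bound on $\size_{\SBC^-}(\BPHP_n)$ is $2^{O(n\log n)}$ rather than $2^{O(n)}$ (since $\BPHP_n$ has $(n+1)\log n$ variables), but this still gives polynomial~$m$ and does not affect your argument.
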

\begin{proof}
  Buss and Thapen~\cite[Theorem~4.4]{BT21}
  gave polynomial-size proofs of~$\BPHP_n$
  in~$\SPR^-$, which $\ER$ simulates.
  By \cref{thm:guarding-blocked-pair-easy-for-GER}, we have
  \begin{equation*}
    \size_{\GER^-}(\cK(\BPHP_n)) = n^{O(1)}.
  \end{equation*}
  \crefnosort{thm:SBC-size-lower-bound-for-BPHP,%
    thm:guarding-blocked-pair-hard-for-SBC} give
  \begin{equation*}
    \size_{\SBC^-}(\cK(\BPHP_n)) = 2^{\Omega(n)}.
  \end{equation*}
  Thus, the bit pigeonhole principle with $\cK$~applied to it
  exponentially separates~$\GER^-$ from~$\SBC^-$.
\end{proof}

\section*{Acknowledgments}
\label{sec:acknowledgments}

I thank Sam Buss, Marijn Heule, Jakob Nordström, and Ryan O'Donnell
for useful discussions.
I thank Jeremy Avigad, Ryan O'Donnell, and Bernardo Subercaseaux
for feedback on an earlier version of the paper.
Finally, I thank the STACS reviewers
for their highly detailed comments and suggestions.

\newcommand{\Proc}[1]{Proceedings of the \nth{#1}}
\newcommand{\STOC}[1]{\Proc{#1} Symposium on Theory of Computing (STOC)}
\newcommand{\FOCS}[1]{\Proc{#1} Symposium on Foundations of Computer Science
  (FOCS)} \newcommand{\SODA}[1]{\Proc{#1} Symposium on Discrete Algorithms
  (SODA)} \newcommand{\CCC}[1]{\Proc{#1} Computational Complexity Conference
  (CCC)} \newcommand{\ITCS}[1]{\Proc{#1} Innovations in Theoretical Computer
  Science (ITCS)} \newcommand{\ICS}[1]{\Proc{#1} Innovations in Computer
  Science (ICS)} \newcommand{\ICALP}[1]{\Proc{#1} International Colloquium on
  Automata, Languages, and Programming (ICALP)}
\newcommand{\STACS}[1]{\Proc{#1} Symposium on Theoretical Aspects of Computer
  Science (STACS)} \newcommand{\MFCS}[1]{\Proc{#1} International Symposium on
  Mathematical Foundations of Computer Science (MFCS)}
\newcommand{\LICS}[1]{\Proc{#1} Symposium on Logic in Computer Science
  (LICS)} \newcommand{\CSL}[1]{\Proc{#1} Conference on Computer Science Logic
  (CSL)} \newcommand{\CSLw}[1]{\Proc{#1} International Workshop on Computer
  Science Logic (CSL)} \newcommand{\DLT}[1]{\Proc{#1} International Conference
  on Developments in Language Theory (DLT)} \newcommand{\TAMC}[1]{\Proc{#1}
  International Conference on Theory and Applications of Models of Computation
  (TAMC)} \newcommand{\RTA}[1]{\Proc{#1} International Conference on Rewriting
  Techniques and Applications (RTA)} \newcommand{\IJCAR}[1]{\Proc{#1}
  International Joint Conference on Automated Reasoning (IJCAR)}
\newcommand{\CADE}[1]{\Proc{#1} Conference on Automated Deduction (CADE)}
\newcommand{\SAT}[1]{\Proc{#1} International Conference on Theory and
  Applications of Satisfiability Testing (SAT)}
\newcommand{\TABLEAUX}[1]{\Proc{#1} International Conference on Automated
  Reasoning with Analytic Tableaux and Related Methods (TABLEAUX)}
\newcommand{\TACAS}[1]{\Proc{#1} International Conference on Tools and
  Algorithms for the Construction and Analysis of Systems (TACAS)}
\newcommand{\LPAR}[1]{\Proc{#1} International Conference on Logic for
  Programming, Artificial Intelligence and Reasoning (LPAR)}
\newcommand{\HVC}[1]{\Proc{#1} Haifa Verification Conference (HVC)}
\newcommand{\DAC}[1]{\Proc{#1} Design Automation Conference (DAC)}
\newcommand{\DATE}{Proceedings of the Design, Automation and Test in Europe
  Conference (DATE)} \newcommand{\ISAIM}[1]{\Proc{#1} International Symposium
  on Artificial Intelligence and Mathematics (ISAIM)}
\newcommand{\AAAI}[1]{\Proc{#1} AAAI Conference on Artificial Intelligence
  (AAAI)} \newcommand{\ICML}[1]{\Proc{#1} International Conference on Machine
  Learning (ICML)} \newcommand{\NeurIPS}[1]{\Proc{#1} Conference on Neural
  Information Processing Systems (NeurIPS)}

\end{document}